\newcommand{\R}{\mathbb{R}}
\newcommand{\E}{\mathbb{E}}
\newcommand{\Prob}{\Pr}
\newcommand{\Var}{\mathrm{Var}}
\newcommand{\N} {\mathcal{N}}
\newcommand{\CI} {\mathrm{CI}}
\newcommand{\dx}[1]{\ \mathrm{d} #1}
\newcommand{\dy}[1]{\text{d}#1}
\newcommand{\ff}[2]{\frac{#1}{#2}}
\newcommand{\dd}[2]{\ff{\dx{#1}}{\dx{#2}}}
\newcommand{\pp}[2]{\ff{\partial{#1}}{\partial{#2}}}
\newcommand{\pd}[1]{\partial_{#1}}
\newcommand{\indep}{\rotatebox[origin=c]{90}{$\models$}}
\newcommand{\I}[1]{\mathbb{1}\!\left\{ #1 \right\}}
\newcommand{\Ind}{\mathbb{1}}
\newcommand{\lam}{\lambda}
\newcommand{\IF}[1]{\mathbb{IF}[#1]}
\newcommand{\EIF}{\mathbb{IF}}
\newcommand{\bmidl}{\bigg|_{\lambda = 0}}
\newcommand{\midl}{|_{\lambda = 0}}
\newcommand{\tgamma}{\tilde{\gamma}}
\newcommand{\plam}{\partial_{\lam}}
\newcommand{\dequiv}{\hspace{0.5em} \overset{d}{=} \hspace{0.5em}}
\newcommand{\supp}{\mathrm{Supp}}
\newcommand{\score}{\mathbb{S}}
\newcommand{\med}{\mathrm{Median}}
\newcommand{\pto}{\overset{P}{\to}}
\newcommand{\dto}{\overset{\mathcal{D}}{\to}}
\newcommand{\expit}{\operatorname{expit}}
\newtheorem{theorem}{Theorem}
\newtheorem{proposition}{Proposition}
\newtheorem{lemma}{Lemma}
\newtheorem{assumption}{Assumption}
\newtheorem{remark}{Remark}
\newtheorem{corollary}{Corollary}
\theoremstyle{definition}
\newtheorem{example}{Example}
\title{On a Debiased and Semiparametric Efficient\\ Changes-in-Changes Estimator}
\author[a]{Jinghao Sun\thanks{\texttt{jinghao.sun@pennmedicine.upenn.edu}}}
\author[a,b]{Eric J. Tchetgen Tchetgen\thanks{\texttt{ett@wharton.upenn.edu}}}
\affil[a]{Department of Biostatistics, Epidemiology, \& Informatics, University of Pennsylvania}
\affil[b]{The Statistics and Data Science Department of the Wharton School, University of Pennsylvania}
\begin{document}

\maketitle

\begin{abstract}
  \noindent We present a novel extension of the influential changes-in-changes (CiC) framework \citep{athey2006identification} for estimating the average treatment effect on the treated (ATT) and distributional causal effects in panel data with unmeasured confounding. While CiC relaxes the parallel trends assumption in difference-in-differences (DiD), existing methods typically assume a scalar unobserved confounder and monotonic outcome relationships, and lack inference tools that accommodate continuous covariates flexibly. Motivated by empirical settings with complex confounding and rich covariate information, we make two main contributions. First, we establish nonparametric identification under relaxed assumptions that allow high-dimensional, non-monotonic unmeasured confounding. Second, we derive semiparametrically efficient estimators that are Neyman orthogonal to infinite-dimensional nuisance parameters, enabling valid inference even with machine learning-based estimation of nuisance components. We illustrate the utility of our approach in an empirical analysis of mass shootings and U.S. electoral outcomes, where key confounders, such as political mobilization or local gun culture, are typically unobserved and challenging to quantify.\\

  \noindent \textbf{Keywords}: Difference-in-Differences, Unmeasured Confounding, Double Machine Learning, Causal Inference, Policy Evaluation, Panel Data.
  
\end{abstract}

\section{Introduction}

Do mass shootings influence voter behavior in the United States? Recent years have seen a tragic rise in mass shootings, prompting renewed interest in how such salient events shape public opinion and political outcomes \citep{peterson2024epidemiology}. While some studies suggest that mass shootings lead to decreased Republican vote share in affected areas \citep{yousaf2021sticking}, others argue these findings may reflect methodological artifacts, such as violations of key identification assumptions \citep{hassell2025navigating}. At stake is not only a question of electoral dynamics but also our understanding of how collective trauma and public safety crises influence democratic accountability.

Although the electoral effects of mass shootings offer a timely and consequential case study, they reflect a more general challenge in causal inference: how to credibly estimate treatment effects from panel data when treatment is non-randomly assigned, and unmeasured confounding may be present. These problems arise widely in economics, political science, public health, and policy evaluation \citep{streeter2017adjusting, tchetgen2020introduction, chiu2023causal,arkhangelsky2024causal}, where researchers seek to understand the effects of interventions, shocks, or exposures in settings where randomized experiments are infeasible. Conventional methods like difference-in-differences (DiD) require strong assumptions—such as parallel trends—that may not hold in practice, particularly when treatment assignment or outcome evolution is driven by latent, possibly complex and high-dimensional factors. 

To address these challenges, this paper develops a general framework for robust causal inference in panel data settings with unmeasured confounding and flexible covariate structures. We build on and extend the influential \emph{Changes-in-Changes} (CiC) framework of \citet{athey2006identification}. However, existing CiC methods typically assume a single scalar unmeasured confounder and a monotonic relationship with the outcome. These assumptions may be overly restrictive in settings characterized by complex, nonlinear confounding structures.

Our first contribution is to establish new nonparametric identification results under a relaxed set of assumptions that accommodate high-dimensional, non-monotonic unmeasured confounders. These results enable the identification of the average treatment effect on the treated (ATT), as well as other distributional causal estimands such as quantile treatment effects, without relying on linear trend or monotonicity assumptions.

Our second contribution, building on this foundation, is to develop a semiparametric estimation strategy that achieves both efficiency and robustness in high-dimensional settings. We derive the efficient influence functions (EIFs) for the ATT and for general causal estimands defined through moment conditions, and use them to construct estimators that are Neyman orthogonal to infinite-dimensional nuisance components. This orthogonality ensures that errors in estimating nuisance functions, even when using flexible machine learning methods, do not have a first-order impact on the estimation of the target parameter. To further reduce overfitting bias, we incorporate sample splitting and cross-fitting.

We illustrate the practical utility of our approach through illustrative simulation studies and an empirical application investigating the electoral consequences of mass shootings. The proposed estimator consistently demonstrates strong performance in terms of coverage, bias, and efficiency. While this application serves as a motivating example, the proposed framework is broadly applicable to a wide range of empirical problems in observational panel data settings.

\subsection{Related Literature}

Our work contributes to a broad and rapidly evolving literature on causal estimation in panel data, particularly within the difference-in-differences (DiD) framework. The classical DiD approach relies on the ``parallel trends'' assumption, which states that in the absence of treatment, treated and control units would have followed parallel outcome trajectories \citep{angrist2009mostly, baker2025difference}. Extensions incorporating conditioning on observed covariates relax this assumption somewhat, allowing for heterogeneous trends \citep{heckman1997matching, abadie2005semiparametric, sant2020doubly, callaway2021difference}. Still, the plausibility of parallel trends is often questionable, especially when outcomes are bounded or display nonlinear dynamics \citep{roth2023parallel}.

To address these issues, researchers have proposed various alternative strategies:

\begin{itemize}
  \item \emph{Conditional independence approaches} assume treatment assignment is (conditionally) independent of the untreated potential outcome or its change \citep{fan2012partial, ding2019bracketing, hernan2020causal}.
  
  \item \emph{Bounding strategies} accommodate violations of parallel trends by allowing bounded deviations and using partial identification techniques \citep{rambachan2023more, ye2024negative}.
  
  \item \emph{Instrumental variable methods} introduce auxiliary variables that help correct for potential parallel-trends violations via exogenous variation \citep{richardson2023generalized, ye2023instrumented, zhao2025semiparametric}.
\end{itemize}

A complementary line of research is the \emph{Nonlinear DiD approach}, which impose structure on transformed outcomes or recast identification assumptions on alternative scales \citep{bonhomme2011recovering, puhani2012treatment, callaway2019quantile, park2022universal, wooldridge2023simple, roth2023parallel, fernandez2024distribution}. The present work contributes to this growing literature by extending the Changes-in-Changes (CiC) framework.

The CiC method introduced by \citet{athey2006identification} was a landmark in allowing nonparametric identification of ATT and distributional causal effects without relying on parallel trends. Its identification strategy is built on a ``production function'' model and is invariant to monotonic transformations of the outcome. However, standard CiC implementations typically assume a scalar unmeasured confounder and a monotonic relationship between the confounder and the outcome. Our work builds on this foundation by relaxing these assumptions. We show that identification is still possible under a more general confounding structure, thereby broadening the scope of CiC-style methods in practice.

Additionally, covariates are often crucial for reducing bias from observed confounding. \citet{athey2006identification} outlined a restrictive semiparametric extension that incorporated covariates linearly, without allowing treatment effects to vary across covariate levels. Later approaches \citep{melly2015changes, thome2025estimating} introduced more flexible semiparametric adjustments but continued to rely on linear functional form assumptions, leaving room for model misspecification. By contrast, our framework is fully nonparametric with respect to the observed data, enabling flexible covariate adjustment and allowing treatment effects to vary flexibly, while attaining the semiparametric efficiency bound.

\section{Setting and Notation}

Let $(\Omega, \mathcal{F}, \mathbb{P})$ denote the underlying complete probability space. We observe $n$ independent and identically distributed observations $W_i = (Y_{0,i}, Y_{1,i}, A_i, L_i)$, for $i = 1, \dots, n$, where $Y_{t,i} \in \mathbb{R}$ denotes the outcome at time $t \in \{0,1\}$, $A_i \in \{0,1\}$ is the binary treatment indicator (1 for treated, 0 for control), and $L_i \in \mathbb{R}^p$ is a vector of pre-treatment covariates.

To accommodate unmeasured confounding, we allow for an unobserved random element $U_i$, which may be of arbitrary type (e.g., vector- or function-valued). Under the potential outcomes framework, let $Y_t^{a = \tilde{a}}$ denote the outcome at time $t$ had the unit been assigned treatment $A = \tilde{a}$, possibly contrary to fact. 

The parameter of interest is the \emph{average treatment effect on the treated} (ATT) at time \( t = 1 \), defined as
\[
  \theta = \mathbb{E}\left[Y_1^{a = 1} - Y_1^{a = 0} \,\middle|\, A = 1\right],
\]
where \( \theta \in \Theta \subset \mathbb{R} \). The ATT is particularly relevant in policy evaluation, as it captures the causal effect for the subpopulation that actually received the intervention, without requiring additional assumptions needed for identifying the average treatment effect (ATE). 

Let \(P\) be the true distribution of the observed data. We denote the collection of relevant nuisance parameters by \(\eta \in \mathcal{H} \subseteq \tilde{\mathcal{H}}\), where \(\tilde{\mathcal{H}}\) is a normed space equipped with the \(L^2(P)\) norm, and $\mathcal{H}$ is a convex subset.

For a real-valued random variable \( X \in \mathbb{R} \), we denote its cumulative distribution function (CDF) and quantile function by
\[
  F_X(x) \equiv \Pr(X \le x), \quad
  Q_X(u) \equiv \inf\{x \in \mathbb{R} : F_X(x) \ge u\}, \quad u \in (0,1).
\]
When \( X \) admits a density with respect to a dominating measure (e.g., the Lebesgue measure for continuous outcomes), we denote it by \( f_X(x) \).
Conditional versions given a random element $V$ are $F_{X \mid V}$, $f_{X \mid V}$, and $Q_{X \mid V}$. The support of $X$ under $P$ is denoted by $\supp(X)$.

\section{Identification}

We begin by outlining the proposed identification strategy for the ATT $\theta$. We then extend the results to distributional causal estimands including the quantile treatment effect on the treated and the counterfactual distribution. We then illustrate two common data-generating scenarios that satisfy the identification assumptions. All technical proofs are deferred to the Supplementary Material \ref{appendix:proof}. 

Assumption \ref{assum:setting} defines our setting and are standard in the causal inference \citep{hernan2020causal} and difference-in-differences literature \citep{baker2025difference}.

\begin{assumption}[Setting] \label{assum:setting}
We make the following assumptions:
\begin{enumerate}[label=(\alph*), ref=\theassumption(\alph*)]
    \item \label{assum:cc} {(Causal Consistency)} For $t \in \{0,1\}$, the observed outcome satisfies $Y_{t} = Y_{t}^{a = 1}A + Y_{t}^{a=0}(1-A)$.
    
    \item \label{assum:confounding} {(Latent Unconfoundedness)} For $t = 0, 1$, $A \indep Y_{t}^{a=0} \mid L, U$.
    
    \item \label{assum:pos_U} {(Positivity)} There exists $\delta > 0$ such that $\Prob(A = 1 \mid L, U) \in (\delta, 1 - \delta)$ almost surely.
    
    \item \label{assum:na} {(No Anticipation)} The potential outcomes at time $0$ do not depend on treatment: $Y_{0}^{a=1} = Y_{0}^{a=0}$ almost surely.
\end{enumerate}
\end{assumption}

Assumption~\ref{assum:cc} links observed outcomes to their corresponding potential outcomes.  
Assumption~\ref{assum:confounding} implies that, conditional on both observed covariates \( L \) and latent variables \( U \), treatment assignment is independent of the untreated potential outcome. This contrasts with the general case where \( A \) is not conditionally independent of \( Y_t^{a=0} \) given \( L \) alone.
Assumption~\ref{assum:pos_U} ensures sufficient overlap in treatment assignment across values of $L$ and $U$, enabling regular identification and estimation.  
In some settings, Assumption~\ref{assum:na} may be redundant if one is willing to posit an underlying graphical causal model, such as a single world intervention graph (SWIG) \citep{richardson2013single} or a nonparametric structural equation model with independent errors \citep{pearl2009causality}. These frameworks often imply Assumption~\ref{assum:na} through the temporal ordering of variables, particularly when treatment is assigned strictly between the pre- and post-treatment periods \citep{piccininni2025refining}.

Assumption \ref{assum:ct} pertains to the distributional form of the outcomes and reflects the continuous nature of the outcome considered in this paper.

\begin{assumption}[Continuity] \label{assum:ct}
We assume that, almost surely:
\begin{enumerate}[label=(\alph*), ref=\theassumption(\alph*)]
    \item \label{assum:ct0} The conditional distribution function $F_{Y_{0}\mid A=0, L, U}(\cdot)$ is continuous.
    \item \label{assum:ct1} The conditional distribution function $F_{Y_{1}\mid A=0, L, U}(\cdot)$ is continuous.
\end{enumerate}
\end{assumption}

Assumption~\ref{assum:db} plays a central role in our identification strategy. It allows for causal inference despite the presence of unmeasured confounding, and serves as a relaxation of the traditional parallel trends assumption used in DiD. Let ``$\overset{d}{=}$'' denote equality in distribution.

\begin{assumption}[Distributional Bridge] \label{assum:db}
There exists a function $\gamma(y, l)$, nondecreasing in $y$ for all $l$, such that, almost surely,
\[
\gamma(Y_0^{a = 0}, L)\mid L, U \overset{d}{=} Y_1^{a=0} \mid L, U.
\]
\end{assumption}

Assumption~\ref{assum:db} states that, conditional on covariates and latent variables, the \emph{distribution} of the untreated potential outcome at time $t = 1$ can be characterized by a monotonic transformation of the untreated baseline outcome and observed covariates. This distributional assumption is notably weaker than a rank-preservation condition, such as requiring a strictly monotonic relationship between potential outcomes $Y_0^{a=0}$ and $Y_1^{a=0}$ for each unit. 

The distributional bridge condition is a stronger analog of the \emph{outcome confounding bridge function} used in proximal causal inference \citep{miao2024confounding, cui2024semiparametric}. In that framework, \(Y_0^{a=0}\) is treated as a negative control outcome or an outcome confounding proxy, and one assumes the existence of a square-integrable function \(g(y,l)\) such that
\[
\E[g(Y_0^{a=0}, L) \mid L, U] = \E[Y_1^{a=0} \mid L, U].
\]
A distributional bridge function satisfies this with \(g = \gamma\), but further imposes that for any measurable function \(h(y,l)\),
\[
\E[h(\gamma(Y_0^{a=0}, L), L) \mid L, U] = \E[h(Y_1^{a=0}, L) \mid L, U],
\]
a condition that recovers the full conditional distribution of \(Y_1^{a=0}\).

This added stringency is not merely stronger---it is essential for identifying distributional causal estimands such as quantile treatment effects at arbitrary levels, which require recovering aspects of the outcome distribution beyond its mean. Additionally, it enables identification without requiring a separate \emph{treatment-confounding proxy}, which is typically needed in the proximal causal inference framework.

The next result offers an equivalent and more concrete statement of Assumption~\ref{assum:db}.

\begin{lemma}  \label{lem:QQ}
Under Assumptions~\ref{assum:setting} and~\ref{assum:ct}, Assumption~\ref{assum:db} holds if and only if the quantile-quantile transform
\[
Q_{Y_{1} \mid A=0, L = l, U = u} \circ F_{Y_{0} \mid A=0, L = l, U = u}(\cdot)
\]
is invariant with respect to \(u\) for all values of \(l\). In this case, the function \(\gamma(y, l)\) in Assumption~\ref{assum:db} is given explicitly by
\[
\gamma(y, l) = Q_{Y_{1} \mid A=0, L = l, U = u} \circ F_{Y_{0} \mid A=0, L = l, U = u}(y),
\]
for any \(u\).
\end{lemma}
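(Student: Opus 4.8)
The plan is to prove both directions by first transferring every object onto the latent conditional scale, and then exploiting the essential uniqueness of the monotone transport map between two continuous one-dimensional laws. The first step is to reconcile the two formulations: Assumption~\ref{assum:db} is stated in terms of potential outcomes, whereas the quantile--quantile map in the statement involves the \emph{observed} conditionals given $A=0$. Causal consistency (Assumption~\ref{assum:cc}) gives $Y_t = Y_t^{a=0}$ on $\{A=0\}$, while latent unconfoundedness (Assumption~\ref{assum:confounding}) gives $A \indep Y_t^{a=0}\mid L,U$; together these yield, for $t\in\{0,1\}$ and almost every $(l,u)$,
\[
F_{Y_t\mid A=0,L=l,U=u}=F_{Y_t^{a=0}\mid L=l,U=u}.
\]
By Assumption~\ref{assum:ct} these conditional laws are continuous, and the map in the statement coincides with $\gamma_{l,u}(y):=Q_{Y_1^{a=0}\mid L=l,U=u}\circ F_{Y_0^{a=0}\mid L=l,U=u}(y)$. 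The lemma thus reduces to showing that Assumption~\ref{assum:db} holds if and only if $\gamma_{l,u}$ is free of $u$, with $\gamma(\cdot,l)=\gamma_{l,u}$.

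The workhorse for both directions is the following elementary fact, which I would record once. Fix $(l,u)$ and let $X,Z$ denote random variables with the conditional laws of $Y_0^{a=0}$ and $Y_1^{a=0}$ given $(L,U)=(l,u)$. Since $F_X$ is continuous, the probability integral transform gives $F_X(X)\sim\mathrm{Uniform}(0,1)$, and the quantile transform gives $Q_Z(\mathrm{Uniform}(0,1))\overset{d}{=}Z$; hence $\gamma_{l,u}(X)=Q_Z(F_X(X))\overset{d}{=}Z$. So the canonical map $\gamma_{l,u}$ always realizes the conditional distributional matching, and it is nondecreasing as a composition of the nondecreasing maps $F_X$ and $Q_Z$. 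The $(\Leftarrow)$ direction is then immediate: if $\gamma_{l,u}$ does not depend on $u$, set $\gamma(\cdot,l):=\gamma_{l,u}$, which is nondecreasing in its first argument, and the workhorse fact gives $\gamma(Y_0^{a=0},L)\mid L,U\overset{d}{=}Y_1^{a=0}\mid L,U$ almost surely, exactly Assumption~\ref{assum:db}.

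For the $(\Rightarrow)$ direction I would fix $(l,u)$ and compare the given $\gamma(\cdot,l)$ with $\gamma_{l,u}$. Both are nondecreasing and both push the law of $X$ forward to the law of $Z$ --- the former by Assumption~\ref{assum:db}, the latter by the workhorse fact. The key step is the essential uniqueness of such a monotone transport, which I would establish by passing to the uniform scale. Writing $V=F_X(X)\sim\mathrm{Uniform}(0,1)$, continuity of $F_X$ yields $F_X(Q_X(v))=v$ and $Q_X(V)=X$ almost surely, so the two nondecreasing functions $\gamma(\cdot,l)\circ Q_X$ and $\gamma_{l,u}\circ Q_X=Q_Z$ on $(0,1)$ induce the same pushforward of Lebesgue measure. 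Two nondecreasing functions $h_1,h_2$ on $(0,1)$ with $h_1(V)\overset{d}{=}h_2(V)$ must agree a.e.: for each $t$ the sublevel set $\{v:h_i(v)\le t\}$ is an initial interval whose length is the common value $G(t):=\Prob(h_i(V)\le t)$, so $\I{h_1(v)\le t}=\I{h_2(v)\le t}$ for all $v\neq G(t)$; ranging $t$ over the rationals and discarding the resulting countable exceptional set forces $h_1=h_2$ a.e. Undoing the change of variables gives $\gamma(X,l)=\gamma_{l,u}(X)$ almost surely, i.e.\ $\gamma(\cdot,l)=\gamma_{l,u}$ on $\supp(Y_0^{a=0}\mid L=l,U=u)$; since the left-hand side is free of $u$, the map $\gamma_{l,u}$ is $u$-invariant and equals $\gamma(\cdot,l)$.

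The main obstacle is this last uniqueness argument. A nondecreasing function realizing a prescribed pushforward of a continuous law is only \emph{essentially} unique --- it may be altered on a Lebesgue-null set and is unconstrained off the support of $X$ --- so care is needed to phrase the conclusion as an almost-sure (equivalently, on-support) identity rather than a pointwise one, and to absorb the flat portions of $F_X$ through the almost-sure identity $Q_X(F_X(X))=X$. Everything else is routine bookkeeping with the integral and quantile transforms.
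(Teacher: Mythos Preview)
Your proof is correct and takes a genuinely different route from the paper's. The paper invokes the Brenier--McCann theorem from optimal transport to assert in one line that the quantile--quantile map is the \emph{unique} nondecreasing function pushing $F_{Y_0\mid A=0,L,U}$ forward to $F_{Y_1\mid A=0,L,U}$, and then concludes that any $\gamma$ satisfying Assumption~\ref{assum:db} must coincide with it and hence be $u$-free. You instead supply an elementary, self-contained proof of that same one-dimensional uniqueness: pass to the uniform scale via the probability integral transform and show directly that two nondecreasing functions on $(0,1)$ with identical Lebesgue pushforward agree Lebesgue-a.e. This avoids the heavy citation and makes the ``essential'' nature of the uniqueness (a.e., on-support) explicit, including the absorption of flat spots through $Q_X(F_X(X))=X$ a.s.---subtleties the paper's proof leaves to the reader. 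The cost is length; the paper's argument is a few lines, yours a paragraph. Your proposal is also more complete in that it argues both directions separately, whereas the paper's proof effectively spells out only $(\Rightarrow)$ and leaves $(\Leftarrow)$ implicit in the statement that the quantile--quantile map always realizes the required distributional match.
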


This characterization shows that Assumption~\ref{assum:db} implies a form of \emph{invariance of conditional optimal transport map} with respect to unmeasured confounders $U$. We further elaborate on this point in Example~\ref{eg:stm}, which introduces a general class of semiparametric transformation models that comply with Assumption~\ref{assum:db}.

Note that Assumptions \ref{assum:setting}-\ref{assum:db} are invariant to monotonic transformations of the outcomes.

We now present our main identification result.

\begin{theorem}[Identification of ATT] \label{thm:idett}
  Under Assumptions~\ref{assum:setting}--\ref{assum:db}, the average treatment effect on the treated (ATT), defined by
  \[
  \theta \equiv \E\left\{ Y_1^{a=1} - Y_1^{a=0} \mid A=1 \right\},
  \]
  is identified by
  \begin{equation} \label{eq:idett}
    \theta = \E\left\{ Y_1 - \gamma(Y_0, L) \mid A=1 \right\} = \E\left\{ Y_1 - Q_{Y_1 \mid A=0, L} \circ F_{Y_0 \mid A=0, L}(Y_0) \mid A=1 \right\}.
  \end{equation}
\end{theorem}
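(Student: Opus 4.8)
The plan is to split the estimand into an observable part and a counterfactual part and to identify the latter through the distributional bridge. Writing $\theta = \E[Y_1^{a=1}\mid A=1] - \E[Y_1^{a=0}\mid A=1]$, the first term is immediate: Assumption~\ref{assum:cc} gives $Y_1 = Y_1^{a=1}$ on the event $\{A=1\}$, so $\E[Y_1^{a=1}\mid A=1] = \E[Y_1\mid A=1]$, which is exactly the observable term in \eqref{eq:idett}. All the work therefore goes into showing $\E[Y_1^{a=0}\mid A=1] = \E[\,Q_{Y_1\mid A=0,L}\circ F_{Y_0\mid A=0,L}(Y_0)\mid A=1\,]$. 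I would also record at the outset that Assumptions~\ref{assum:cc} and~\ref{assum:na} together force $Y_0 = Y_0^{a=0}$ almost surely, so the observed baseline outcome coincides with the untreated baseline potential outcome for every unit, treated or not.

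Next I would identify the counterfactual mean by conditioning on $(L,U)$. By the tower property, $\E[Y_1^{a=0}\mid A=1] = \E\{\E[Y_1^{a=0}\mid L,U,A=1]\mid A=1\}$, and latent unconfoundedness (Assumption~\ref{assum:confounding}) removes the event $\{A=1\}$ from the inner conditioning to give $\E[Y_1^{a=0}\mid L,U]$. I then invoke the bridge in the form supplied by Lemma~\ref{lem:QQ}: since $\gamma(Y_0^{a=0},L)\mid L,U \overset{d}{=} Y_1^{a=0}\mid L,U$, the two sides share a conditional mean, so $\E[Y_1^{a=0}\mid L,U] = \E[\gamma(Y_0^{a=0},L)\mid L,U] = \E[\gamma(Y_0,L)\mid L,U]$, using $Y_0 = Y_0^{a=0}$. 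Because $\gamma(Y_0,L)$ is a function of $(Y_0^{a=0},L)$, a second application of Assumption~\ref{assum:confounding} at $t=0$ lets me reinsert $\{A=1\}$ into the conditioning, and the tower property then collapses the outer expectation, yielding $\E[Y_1^{a=0}\mid A=1] = \E[\gamma(Y_0,L)\mid A=1]$.

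The remaining and genuinely substantive step is to show that the bridge function $\gamma$, which Lemma~\ref{lem:QQ} defines through the \emph{latent} conditional laws, coincides with the \emph{observable} quantile--quantile transform $Q_{Y_1\mid A=0,L=l}\circ F_{Y_0\mid A=0,L=l}$ on the relevant support; this is where I expect the main difficulty. Fixing $l$, consistency and unconfoundedness identify each latent law $F_{Y_t\mid A=0,L=l,U=u}$ with $F_{Y_t^{a=0}\mid L=l,U=u}$, and by Lemma~\ref{lem:QQ} the nondecreasing map $\gamma(\cdot,l)$ pushes $F_{Y_0\mid A=0,L=l,U=u}$ forward to $F_{Y_1\mid A=0,L=l,U=u}$ for every $u$. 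Since pushforward commutes with mixing over $U$, and the observed laws $F_{Y_t\mid A=0,L=l}$ are precisely these mixtures, the \emph{same} map $\gamma(\cdot,l)$ pushes $F_{Y_0\mid A=0,L=l}$ forward to $F_{Y_1\mid A=0,L=l}$. Both laws are continuous by Assumption~\ref{assum:ct}, so the monotone transport between them is essentially unique and equals the quantile--quantile transform; hence $\gamma(y,l) = Q_{Y_1\mid A=0,L=l}\circ F_{Y_0\mid A=0,L=l}(y)$ for $F_{Y_0\mid A=0,L=l}$-almost every $y$. Finally, positivity (Assumption~\ref{assum:pos_U}) forces $U\mid A=1,L=l$ and $U\mid A=0,L=l$ to share support, so $\supp(Y_0\mid A=1,L=l) = \supp(Y_0\mid A=0,L=l)$ and the almost-everywhere identity carries over to the values of $Y_0$ that matter under $\{A=1\}$. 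Substituting this into $\E[\gamma(Y_0,L)\mid A=1]$ and combining with the observable term completes the proof. The delicate points to get exactly right are the commutation of pushforward with the mixture and the uniqueness (only up to $F$-null sets, and only on the support) of the monotone map between continuous distributions, together with the positivity argument transferring the identity from the control to the treated subpopulation.
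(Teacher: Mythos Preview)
Your proposal is correct and follows essentially the same approach as the paper: both arguments use the bridge to equate $Y_1^{a=0}$ with $\gamma(Y_0,L)$ at the latent $(L,U)$ level, then show via a mixture-over-$U$ argument that $\gamma(\cdot,l)$ also transports $F_{Y_0\mid A=0,L=l}$ to $F_{Y_1\mid A=0,L=l}$, whence continuity identifies $\gamma$ with the observable quantile--quantile map, and the tower property finishes. The only cosmetic difference is that the paper writes the mixture step out as an explicit integral identity $F_{Y_0\mid A=0,L=l}(y)=F_{Y_1\mid A=0,L=l}(\gamma(y,l))$ and inverts it directly, whereas you phrase the same step as ``pushforward commutes with mixing'' plus uniqueness of monotone transport; your treatment of the support/positivity issue is, if anything, slightly more explicit than the paper's.
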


Following the main identification theorem, a couple of remarks are warranted. 

\begin{remark}
  Despite differing in identification assumptions, the ATT identification formula derived by \citet{athey2006identification} coincides with \eqref{eq:idett} in the special case where no covariates $L$ are present.
\end{remark}

\begin{remark}\label{rmk:pt}
Both Assumption~\ref{assum:db} and Assumption 3.1 (Model) of \citet{athey2006identification} (appropriately updated to condition on measured covariates $L$) imply the following condition, which does not explicitly reference the unmeasured confounder \(U\):
\begin{equation}\label{eq:nlpt}
    Q_{Y_1^{a=0} \mid A=1, L} \circ F_{Y_0^{a=0} \mid A=1, L}(\cdot)
    = Q_{Y_1^{a=0} \mid A=0, L} \circ F_{Y_0^{a=0} \mid A=0, L}(\cdot),
\end{equation}
which can be interpreted as a general, nonlinear form of the parallel trends assumption. 

An equivalent formulation is
\begin{equation}\label{eq:nlpt-alt}
    F_{Y_0^{a=0} \mid A=1, L} \circ Q_{Y_0^{a=0} \mid A=0, L}(\cdot)
    = F_{Y_1^{a=0} \mid A=1, L} \circ Q_{Y_1^{a=0} \mid A=0, L}(\cdot),
\end{equation}
which may be viewed as a form of Distributional Equi-Confounding (DEC). This concept, closely related to the conditions introduced by \citet{sofer2016negative}, asserts that the magnitude of hidden confounding---measured in terms of distributional distortions---is the same at \(t = 0\) and \(t = 1\). 
Intuitively, if there is no unmeasured confounding, we have \(Y_t^{a=0} \indep A \mid L\), and the composite mapping 
\(
F_{Y_t^{a=0} \mid A=1, L} \circ Q_{Y_t^{a=0} \mid A=0, L}(y)
\) 
reduces to the identity function \(y\). When confounding is present, this mapping deviates from identity, reflecting distributional distortion induced by unmeasured factors.

Either representation of this condition is sufficient for the identification results in Theorem~\ref{thm:idett} and Corollary~\ref{cor:id_other}, and thus may serve as a primitive identification assumption in place of Assumption~\ref{assum:db} in this paper or Assumption 3.1 of \citet{athey2006identification}, provided the remaining conditions are appropriately adjusted to eliminate explicit reference to the latent confounder \(U\). Nevertheless, explicitly modeling \(U\) is useful for articulating a causal narrative and clarifying the identifying assumptions. Proofs of these claims are provided in the Supplementary Material \ref{appendix:proof}.
\end{remark}

While ATT serves as our primary estimand, the same assumptions allow identification of other causal estimands that may be of independent interest.

\begin{corollary}[Identification of CDT and QTT] \label{cor:id_other}
  Under Assumptions~\ref{assum:setting}--\ref{assum:db}, the following quantities are identified:
  \begin{enumerate}[label=(\alph*), ref=\thecorollary(\alph*)]
    \item \label{cor:idcd} {Counterfactual Distribution on the Treated (CDT)}: For $y \in \supp(Y_1)$, the distribution function
    \[
    \vartheta^{CDT,y} \equiv F_{Y_1^{a=0} \mid A=1}(y)
    \]
    is identified by
    \[
    \vartheta^{CDT,y} = \Prob\left\{ Q_{Y_1 \mid A=0, L} \circ F_{Y_0 \mid A=0, L}(Y_0) < y \mid A=1 \right\}.
    \]

    \item \label{cor:idq} {Quantile Treatment Effect on the Treated (QTT)}: For $\tau \in (0,1)$, the $\tau$-quantile treatment effect
    \[
    \vartheta^{QTT,\tau} \equiv Q_{Y_1^{a=1} \mid A=1}(\tau) - Q_{Y_1^{a=0} \mid A=1}(\tau)
    \]
    is identified by
    \[
    \vartheta^{QTT,\tau} = Q_{Y_1 \mid A=1}(\tau) - \inf\left\{ y : \Prob\left( Q_{Y_1 \mid A=0, L} \circ F_{Y_0 \mid A=0, L}(Y_0) \le y \mid A=1 \right) \ge \tau \right\}.
    \]
  \end{enumerate}
\end{corollary}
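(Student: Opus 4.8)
The plan is to reduce both parts to a single distributional identity that is the law-level strengthening of the mechanism behind Theorem~\ref{thm:idett}: writing $\gamma(y,l)\equiv Q_{Y_1\mid A=0,L=l}\circ F_{Y_0\mid A=0,L=l}(y)$ for the observable control-group quantile--quantile transform, the key identity is
\[
Y_1^{a=0}\mid A=1 \;\dequiv\; \gamma(Y_0,L)\mid A=1 .
\]
To establish this at the level of full conditional laws (not merely means, as the ATT formula uses), I would start from Assumption~\ref{assum:db} and Lemma~\ref{lem:QQ}, which supply a nondecreasing, $U$-invariant map $\tgamma$ with $\tgamma(Y_0^{a=0},L)\mid L,U \dequiv Y_1^{a=0}\mid L,U$. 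First I would identify $\tgamma$ with the observable $\gamma$: latent unconfoundedness (Assumption~\ref{assum:confounding}) removes $A$ from the $(L,U)$-conditional laws, and integrating against $P(U\mid A=0,L)$ gives $\tgamma(Y_0,L)\mid A=0,L\dequiv Y_1\mid A=0,L$ after invoking consistency and No Anticipation to pass from potential to observed outcomes; since $\tgamma(\cdot,l)$ is nondecreasing and $F_{Y_0\mid A=0,L=l}$ is continuous (Assumption~\ref{assum:ct0}), $\tgamma(\cdot,l)$ must be the unique increasing transport $\gamma(\cdot,l)$. Repeating the unconfoundedness-and-marginalization step over $U\mid A=1,L$ and then over $L\mid A=1$, and replacing $Y_0^{a=0}$ by the observed $Y_0$ on $\{A=1\}$ via No Anticipation, yields the displayed identity.

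For part~\ref{cor:idcd} I would read the distribution function directly off this identity,
\[
F_{Y_1^{a=0}\mid A=1}(y)=\Prob\!\left(\gamma(Y_0,L)\le y \mid A=1\right),
\]
and then reconcile it with the strict-inequality form in the statement using continuity: Assumption~\ref{assum:ct1} together with consistency and unconfoundedness makes $F_{Y_1^{a=0}\mid L,U}$ continuous almost surely, and a mixture of continuous laws is continuous, so $\Prob(\gamma(Y_0,L)=y\mid A=1)=0$ and the weak and strict inequalities coincide, giving $\vartheta^{CDT,y}=\Prob(\gamma(Y_0,L)<y\mid A=1)$.

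For part~\ref{cor:idq} the minuend is immediate from causal consistency (Assumption~\ref{assum:cc}): since $Y_1=Y_1^{a=1}$ on $\{A=1\}$, we have $Q_{Y_1^{a=1}\mid A=1}(\tau)=Q_{Y_1\mid A=1}(\tau)$, a functional of the observed law. For the subtrahend I would apply the definition $Q_X(\tau)=\inf\{y:F_X(y)\ge\tau\}$ to the distribution identified in part~\ref{cor:idcd}, obtaining $Q_{Y_1^{a=0}\mid A=1}(\tau)=\inf\{y:\Prob(\gamma(Y_0,L)\le y\mid A=1)\ge\tau\}$, which is exactly the stated quantity; subtracting the two terms gives the claim.

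The only genuinely delicate step is identifying the $U$-invariant bridge $\tgamma$ with the observable control-group transform $\gamma$ and then marginalizing against the correct conditional law $P(U\mid A=1,L)$ rather than an arbitrary mixing measure; this is precisely where $U$-invariance from Lemma~\ref{lem:QQ}, monotonicity of $\gamma$, and continuity of $F_{Y_0\mid A=0,L,U}$ all enter. Everything downstream is a mechanical reading of cumulative distribution and quantile functions, with continuity of $Y_1^{a=0}\mid A=1$ needed only to equate the strict and weak inequalities in the CDT formula.
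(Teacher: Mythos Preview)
Your proposal is correct and follows essentially the same route as the paper: the paper invokes the conditional distributional identity $Y_1^{a=0}\mid A=1,L,U \dequiv \gamma(Y_0,L)\mid A=1,L,U$ established inside the proof of Theorem~\ref{thm:idett}, then marginalizes over $(L,U)$ given $A=1$ for part~(a) and appeals to the definition of quantiles for part~(b). You actually supply more detail than the paper does---in particular, your continuity argument reconciling the strict inequality in the statement with the weak inequality that naturally falls out of the CDF computation is a point the paper's own proof silently glosses over.
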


To illustrate the scope of our framework, we next describe two classes of data-generating processes (DGPs) that satisfy Assumptions~\ref{assum:setting}, \ref{assum:ct}, and~\ref{assum:db}. These examples concretely demonstrate the conditions under which ATT is identified as per Theorem~\ref{thm:idett}. Formal verification is deferred to Proposition~\ref{prop:semitrans1} in the Supplementary Material \ref{appendix:proof}.

\begin{example}[Semiparametric Transformation Model] \label{eg:stm}
Consider the following semiparametric transformation model. Suppose Assumptions~\ref{assum:cc}, \ref{assum:pos_U}, and \ref{assum:na} hold. Let $U = (U_1, \ldots, U_q) \in \mathbb{R}^q$ denote a vector of unmeasured confounders, which may consist of arbitrary types. The model is specified as follows:

\begin{enumerate}
  \item {Treatment assignment:} Let $h$ be an arbitrary $\{0,1\}$-valued function, and let $\delta$ be a mean-zero random noise. Treatment is assigned via
  \[
    A = h(L, U, \delta).
  \]

  \item {Potential outcome model:} For each $t \in \{0,1\}$,
  \[
    Y_t^{a=0} = \beta_t(k_t(L) + m(U,L) + \epsilon_t),
  \]
  where $\beta_t(\cdot)$ is a strictly increasing but otherwise unrestricted time-specific transformation function; $k_t$ is a time-specific function of the observed covariates $L$; and $m(\cdot,\cdot)$ satisfies $m(u_0,\cdot)=0$ for a fixed reference value $u_0$, but otherwise is unrestricted in both its arguments. 

  \item {Error structure:} The error term $\epsilon_t$ satisfies the following conditions:
  \[
    \epsilon_t \indep U \mid A=0,L, \quad \delta \indep \epsilon_t \mid L,U, \quad \epsilon_0 \mid A=0, L, U \overset{d}{=} \epsilon_1 \mid A=0, L, U.
  \]
  Let $F_{\epsilon \mid A=0,L, U}$ denote the common conditional distribution of $\epsilon_0$ and $\epsilon_1$ given $A=0, L,$ and $U$. This distribution is assumed to be continuous, strictly increasing, and have mean zero.
\end{enumerate}

Under this model, Theorem~\ref{thm:idett} implies
\[
\theta = \E\left\{ Y_1 - Q_{Y_1 \mid A=0, L} \circ F_{Y_0 \mid A=0, L}(Y_0) \mid A=1 \right\} = \E\left[ Y_1 - \beta_1\left\{ \beta_0^{-1}(Y_0) + k_1(L) - k_0(L) \right\} \mid A=1 \right].
\]

We highlight that no assumptions are placed on the treated potential outcome $Y_1^{a=1}$. 
As a concrete instantiation of our broader framework, this semiparametric model illustrates both the complexity of confounding that can be accommodated and the functional form that the quantile-quantile mapping $\gamma$ may take.

\end{example}

\begin{example}[Linear Model / Difference-in-Differences] \label{eg:did}
Consider a special case of Example~\ref{eg:stm} in which the transformation functions are the linear (identity) mappings: $\beta_0(x) = \beta_1(x) = x$, with measured covariates $L$ omitted for simplicity. In this case, $k_t$ are time-specific intercepts, and 
\begin{align*}
    Q_{Y_1 \mid A=0} \circ F_{Y_0 \mid A=0}(Y_0) &= Y_0 + k_1 - k_0 = Y_0 + \E(Y_1 - Y_0 \mid A=0).
\end{align*}
Then, by Theorem~\ref{thm:idett}, the ATT is given by
\begin{align*}
    \theta &= \E(Y_1^{a=1} - Y_1^{a=0} \mid A=1) \\
    &= \E\left\{ Y_1 - Q_{Y_1 \mid A=0} \circ F_{Y_0 \mid A=0}(Y_0) \mid A=1 \right\} \\
    &= \E(Y_1 - Y_0 \mid A=1) - \E(Y_1 - Y_0 \mid A=0),
\end{align*}
which recovers the standard \emph{difference-in-differences} (DiD) formula.
\end{example}

\section{Estimation}

The identification results established in the previous section provide a foundation for causal inference under our proposed framework. However, identification alone is not sufficient for developing practical estimators with desirable statistical properties, especially when there are (possibly high-dimensional) continuous covariates. To this end, we turn to modern semiparametric efficiency theory. We develop an efficient influence function-based estimator for the causal estimands and establish its theoretical properties.

\subsection{Efficient Influence Function}

A central tool in this theory is the efficient influence function (EIF), which serves multiple purposes. First, it provides straightforward characterization of the semiparametric efficiency bound—the lowest possible asymptotic variance—for regular and asymptotically linear (RAL) estimators of the target parameter under the specified model. Second, the EIF allows for the construction of a RAL estimator that attains the efficiency bound via solving estimating equations. Third, and crucially in modern applications, efficient-influence-function-based estimators are Neyman orthogonal \citep{neyman1959optimal} to nuisance parameters, making them robust to estimation errors in nuisance components. This robustness is critical when using flexible machine learning methods to estimate high-dimensional or nonparametric nuisance functions, as it mitigates the impact of slow convergence rates. These advantages have been widely discussed in foundational and recent literature (e.g., \citealp{newey1990semiparametric}; \citealp{bickel1993efficient}; \citealp{robins2008higher}; \citealp{chernozhukov2018double}).

In what follows, we derive the EIF for ATT based on the identification results in Theorem \ref{thm:idett}. We then generalize the approach to a class of causal parameters defined via moment restrictions, including counterfactual outcome distributions and quantile treatment effects on the treated (CDT and QTT) as examples.

We use the notation \(\EIF(W; \theta, \eta)\) to denote the influence function associated with the target parameter \(\theta\), evaluated at a data point \(W\), where \(\eta\) represents a collection of nuisance functions. 
Define \(\gamma(y, l) \equiv Q_{Y_1 \mid A = 0, L=l} \circ F_{Y_0 \mid A=0, L=l}(y)\) and \(\nu(x,l) \equiv \frac{\Prob(A = 1 \mid \gamma(Y_0, l) = x, L=l)}{\Prob(A = 0 \mid \gamma(Y_0, l) = x, L=l)}\), and let \(\pi \equiv \Prob(A = 1)\). Then we have the following result.

\begin{theorem}[Efficient Influence Function of ATT] \label{thm:eif}
  Under Assumptions \ref{assum:setting}--\ref{assum:db}, the Efficient Influence Function (EIF) for \(\theta\), as identified in Theorem \ref{thm:idett} under a nonparametric model over the observed data, is
  \[
    \EIF(W;\theta, \eta) = \frac{A}{\pi}\left[Y_1 - \gamma(Y_0, L) - \theta\right] + \frac{1 - A}{\pi}\int_{Y_1}^{\gamma(Y_0, L)} \nu(x, L) \, \dx{x},
  \]
  where $\eta = (\gamma, \nu, \pi)$. 
  Consequently, the corresponding semiparametric efficiency bound for \(\theta\) is \(\mathbb{E}\left\{\EIF(W;\theta, \eta)^2\right\}\).
\end{theorem}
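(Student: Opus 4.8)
The plan is to obtain the EIF by computing the pathwise (Gateaux) derivative of the functional $\theta(P)$ along an arbitrary regular parametric submodel $\{P_t\}$ with score $s = s(W)$, and exhibiting it in the form $\partial_t\theta(P_t)\big|_{t=0} = \E[\EIF(W;\theta,\eta)\,s(W)]$. Since the observed-data model is nonparametric, the tangent space is all of mean-zero $L^2(P)$, the representer of this derivative is unique, and it is automatically the efficient influence function; it therefore suffices to produce a single mean-zero representer. I would begin from the ratio representation $\theta = \E[A\{Y_1-\gamma(Y_0,L)\}]/\pi$ with $\pi=\E[A]$, and split the derivative by the quotient rule into (i) terms in which $\gamma$ and $\pi$ are held fixed while only the outer law is perturbed, and (ii) a term arising from perturbing the infinite-dimensional nuisance map $\gamma$ itself. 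Step (i) is routine: differentiating the outer expectation and $\pi=\E[A]$ and recombining yields exactly $\E\big[\tfrac{A}{\pi}\{(Y_1-\gamma(Y_0,L))-\theta\}\,s\big]$, which is the first claimed term.

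Step (ii) is the crux. I would write $\gamma(y,l)=Q_{Y_1\mid A=0,L=l}\circ F_{Y_0\mid A=0,L=l}(y)$ and differentiate the defining identity $F_{Y_1\mid A=0,L=l}(\gamma(y,l))=F_{Y_0\mid A=0,L=l}(y)$ in $t$. Using the standard pathwise derivative of a conditional CDF (a conditional covariance with the score) together with Assumption~\ref{assum:ct}, which guarantees $f_{Y_1\mid A=0,L}>0$ and Hadamard differentiability of the quantile map, the constant terms cancel and give
\[
\partial_t\gamma(y,l) = \frac{1}{f_{Y_1\mid A=0,L=l}(\gamma(y,l))}\,\E\big[\{\I{Y_0\le y}-\I{Y_1\le\gamma(y,l)}\}\,s \,\big|\, A=0,L=l\big].
\]
Substituting this into the remaining term $-\tfrac1\pi\E[A\,\partial_t\gamma(Y_0,L)]$ produces a nested expectation: an outer average over $(Y_0,L)\mid A=1$ and an inner conditional average over the control law $(Y_0,Y_1)\mid A=0,L$ that carries the score $s$. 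Applying Fubini and collecting the coefficient of $s$, one finds that this contribution is supported on the event $\{A=0\}$.

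The main obstacle is converting this nested expectation into the single integral $\tfrac{1-A}{\pi}\int_{Y_1}^{\gamma(Y_0,L)}\nu(x,L)\dx{x}$. The key manipulations are a change of variables $x=\gamma(y_0,l)$, whose Jacobian $\partial_{y_0}\gamma = f_{Y_0\mid A=0,L}/f_{Y_1\mid A=0,L}(\gamma)$ cancels both the $1/f_{Y_1\mid A=0,L}(\gamma)$ factor and the control density $f_{Y_0\mid A=0,L}$, and a Bayes-rule identity that converts the treated-to-control density ratio $f_{Y_0\mid A=1,L}/f_{Y_0\mid A=0,L}$ into the odds $\nu$, with the surviving $\pi$ and $\Prob(A=0\mid L)$ factors cancelling against the joint density at the generic control point. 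Because $\gamma(\cdot,l)$ is strictly increasing, the indicator in the outer argument becomes $\I{\gamma(\cdot,l)\le x}$, so the difference of the two indicator integrals collapses to $\int_{Y_1}^{\gamma(Y_0,L)}\nu(x,L)\dx{x}$; tracking the overall sign then yields the $+$ sign in the claimed expression. Here Assumption~\ref{assum:db} ensures $\gamma$ is a bona fide increasing transport map so the change of variables is valid, and Assumption~\ref{assum:pos_U} guarantees $\nu$ is finite and $\pi\in(0,1)$. I would finish by verifying that both terms have mean zero (the first by the definition of $\theta$, the second by reversing the same computation), which, together with the nonparametric tangent space, certifies that this representer is the EIF and that the efficiency bound equals $\E[\EIF(W;\theta,\eta)^2]$.
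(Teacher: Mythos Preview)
Your proposal is correct and reaches the same EIF, but the route you take for the key step~(ii)---the contribution from perturbing $\gamma$---is genuinely different from the paper's. You differentiate the defining identity $F_{Y_1\mid A=0,L}(\gamma)=F_{Y_0\mid A=0,L}$ to obtain $\partial_t\gamma$ explicitly in terms of $1/f_{Y_1\mid A=0,L}(\gamma)$ and a conditional covariance with the score, then perform a change of variables $x=\gamma(y_0,l)$ and Bayes-rule density manipulations to collapse everything to the integral $\int_{Y_1}^{\gamma(Y_0,L)}\nu(x,L)\,\mathrm{d}x$. The paper instead exploits the distributional identity $\gamma_\lambda(Y_0,L)\mid A=0,L \overset{d}{=} Y_1\mid A=0,L$ for \emph{every} $\lambda$: differentiating $0=\E_\lambda[\rho(\gamma_\lambda(Y_0,L),L)-\rho(Y_1,L)\mid A=0,L]$ and choosing $\rho$ with $\partial_x\rho=\nu$ yields the integral form in one stroke, with no explicit densities, Jacobians, or change of variables.

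Your approach is more mechanical and closer in spirit to standard quantile-functional EIF derivations; it makes the dependence on $f_{Y_1\mid A=0,L}$ and the cancellation structure fully explicit. The paper's approach is slicker and, importantly, avoids needing $f_{Y_1\mid A=0,L}>0$ pointwise or strict monotonicity of $\gamma$---assumptions you invoke but which Assumption~\ref{assum:ct} (continuity of the CDF) alone does not quite guarantee. The paper handles the possible flat spots via a separate $\sigma$-algebra lemma showing $\E[A\mid Y_0,L]=\E[A\mid \gamma(Y_0,L),L]$ almost surely, which is also what justifies equating $P(A=1\mid Y_0,L)/P(A=0\mid Y_0,L)$ with $\nu(\gamma(Y_0,L),L)$ in your Bayes step. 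The paper's distributional-identity trick also generalizes cleanly to the broader class of estimands in Theorem~\ref{thm:eif_gen}, whereas your change-of-variables argument would need to be redone for each $\tilde g$.
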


We now extend the previous result to a broad class of estimands on the treated, denoted by \(\vartheta\), which are identified through a moment condition of the form:
\begin{equation} \label{eq:estimand}
  \mathbb{E}\left[g(W; \vartheta, \gamma) \mid A = 1\right] = 0,
\end{equation}
where the function \(g(W; \vartheta, \gamma)\) takes the form
\[
  g(W; \vartheta, \gamma) = \tilde{g}(\gamma(Y_0, L), \vartheta),
\]
with \(\tilde{g}(\cdot, \cdot)\) known, right-continuous, and of bounded variation in its first argument. Several important causal estimands fall into this framework. For example:

\begin{enumerate}
  \item {Counterfactual mean on the treated:}
  For \(\vartheta = \mathbb{E}[Y_{1}^{a=0} \mid A = 1]\),
  \[
    g(W; \vartheta, \gamma) = \gamma(Y_0, L) - \vartheta.
  \]

  \item {Counterfactual distribution function:}
  For \(\vartheta = F_{Y^{a=0}_1 \mid A=1}(y)\),
  \[
    g(W; \vartheta, \gamma) = \Ind\{\gamma(Y_0, L) < y\} - \vartheta.
  \]

  \item {Counterfactual quantile on the treated:} For \(\vartheta = Q_{Y^{a=0}_1 \mid A=1}(\tau)\), assuming sufficient regularity of the quantile \citep{firpo2007efficient},
  \[
    g(W; \vartheta, \gamma) = \Ind\{\gamma(Y_0, L) < \vartheta\} - \tau.
  \]
\end{enumerate}

The following theorem characterizes the efficient influence function (EIF) for this general class of estimands.

\begin{theorem}[Efficient Influence Function for General Estimands] \label{thm:eif_gen}
Under Assumptions~\ref{assum:setting}--\ref{assum:db}, the efficient influence function (EIF) for the estimand \(\vartheta\), identified by Equation~\eqref{eq:estimand}, is given by
\[
  \EIF(W;\vartheta, \eta) = \frac{
    A\,g(W;\vartheta, \gamma)
    + (A - 1) \int_{\left(Y_1, \gamma(Y_0, L)\right]} \nu(x, L) \, \mathrm{d}_x \tilde{g}(x, \vartheta)
  }{
    -\pi\,\left.\partial_{\vartheta'} \mathbb{E}[g(W;\vartheta', \gamma) \mid A = 1]\right|_{\vartheta' = \vartheta}
  },
\]
where the integral is a Lebesgue--Stieltjes integral of \(\nu\) with respect to \(\tilde{g}\) in \(x\), and \(\eta\) denotes the collection of nuisance parameters evaluated at their true values. The semiparametric efficiency bound for \(\vartheta\) is \(\mathbb{E}\left\{ \EIF(W;\vartheta, \eta)^2 \right\}\).
\end{theorem}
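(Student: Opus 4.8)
The plan is to derive the efficient influence function by pathwise differentiation along regular one-dimensional parametric submodels $\{P_\lambda\}$ with $P_0=P$ and score $S=\partial_\lambda\log p_\lambda|_{\lambda=0}$, $\mathbb{E}[S]=0$. Because the model over the observed data $W=(Y_0,Y_1,A,L)$ is fully nonparametric, the tangent space is all of $L_2^0(P)$, so any gradient of $\vartheta(P)$---any $\varphi$ with $\mathbb{E}[\varphi]=0$ satisfying $\partial_\lambda\vartheta(P_\lambda)|_{\lambda=0}=\mathbb{E}[\varphi S]$ for every such submodel---is automatically the unique EIF, and no separate projection is needed. I would write the defining moment condition in its unconditional form $\mathbb{E}_{P_\lambda}[A\,\tilde g(\gamma_\lambda(Y_0,L),\vartheta_\lambda)]=0$, where $\gamma_\lambda$ is the quantile--quantile transform computed under $P_\lambda$, and differentiate at $\lambda=0$. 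The total derivative splits into three pieces: a score term $(I)=\mathbb{E}[A\,g(W;\vartheta,\gamma)S]$ from varying the measure; a term $(II)=\dot\vartheta\,\mathbb{E}[A\,\partial_\vartheta g]$ from varying the target; and a nuisance term $(III)$ from varying $\gamma$.

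Pieces $(I)$ and $(II)$ are routine. Piece $(II)$ supplies the denominator, since $\mathbb{E}[A\,\partial_\vartheta g]=\pi\,\partial_{\vartheta'}\mathbb{E}[g(W;\vartheta',\gamma)\mid A=1]|_{\vartheta'=\vartheta}$, which I assume is nonzero so that the implicit function theorem guarantees $\lambda\mapsto\vartheta_\lambda$ is differentiable. Solving the differentiated identity for $\dot\vartheta$ and matching against $\mathbb{E}[\varphi S]$ shows that $\varphi$ must carry the stated denominator, that $A\,g(W;\vartheta,\gamma)$---which is mean-zero by the moment condition---is the first numerator term, and that the remaining numerator term must satisfy $(III)=\mathbb{E}\big[(A-1)\int_{(Y_1,\gamma(Y_0,L)]}\nu(x,L)\,d_x\tilde g(x,\vartheta)\,S\big]$.

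The heart of the proof, and the main obstacle, is establishing this last identity. First I would compute the pathwise derivative of $\gamma(y,l)=Q_{Y_1\mid A=0,L=l}\circ F_{Y_0\mid A=0,L=l}(y)$. Combining the derivative of the conditional CDF with the derivative of the conditional quantile via the implicit function theorem yields, with $x=\gamma(y,l)$,
\[
\dot\gamma(y,l)=\frac{1}{f_{Y_1\mid A=0,L=l}(x)}\,\mathbb{E}\big[(\Ind{Y_0\le y}-\Ind{Y_1\le x})\,S\mid A=0,L=l\big].
\]
Two facts then make the terms collapse. By the probability integral transform, $\gamma(Y_0,l)\mid A=0,L=l\overset{d}{=}Y_1\mid A=0,L=l$, so the conditional density of $\gamma(Y_0,l)$ given $A=0,L=l$ is exactly $f_{Y_1\mid A=0,L=l}$, cancelling the denominator of $\dot\gamma$; and Bayes' rule expresses the ratio of the conditional densities of $\gamma(Y_0,L)$ given $A=1$ versus $A=0$ in terms of $\nu$. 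Substituting $\dot\gamma$ into $(III)$, using that $\gamma(\cdot,l)$ is continuous and increasing under Assumption~\ref{assum:ct} to change variables to the transported scale $x=\gamma(Y_0,L)$, and applying Fubini to interchange the $x$-integral against $d_x\tilde g$ with the expectation, then reweights the $A=1$ conditional law to the $A=0$ law (producing the factor $A-1$) and reduces the difference of indicators to the interval $(Y_1,\gamma(Y_0,L)]$, giving exactly the claimed form. The delicacy is that $\tilde g$ is only right-continuous and of bounded variation, so the $\lambda$-derivative of $\tilde g(\gamma_\lambda(Y_0,L),\vartheta)$ cannot be taken pointwise; I would instead work with the Lebesgue--Stieltjes increment $\tilde g(\gamma_\lambda)-\tilde g(\gamma)=\int\Ind{\gamma<x\le\gamma_\lambda}\,d_x\tilde g$, integrate over the (continuous) law of $Y_0$ first, and only then pass to the limit, so that atoms of $d_x\tilde g$ are absorbed by the surrounding integration rather than differentiated through.

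Finally I would assemble the three pieces, confirm $\mathbb{E}[\varphi]=0$, and conclude that $\varphi$ is the EIF with efficiency bound $\mathbb{E}[\varphi^2]$ by the nonparametric-model argument above. As a consistency check, taking $\tilde g(x,\vartheta)=x-\vartheta$ (the counterfactual mean) gives $d_x\tilde g=dx$ and reproduces the $\gamma$-dependent part of the ATT influence function in Theorem~\ref{thm:eif}, which I would use to pin down the signs and the endpoint convention in the interval integral.
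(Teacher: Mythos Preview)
Your proposal is correct and reaches the same EIF, but it handles the crucial nuisance term $(III)$ by a different device than the paper. You compute $\dot\gamma(y,l)$ explicitly via the derivatives of the conditional CDF and quantile, obtaining the closed form with $f_{Y_1\mid A=0,L}^{-1}$ and the indicator difference $\Ind\{Y_0\le y\}-\Ind\{Y_1\le x\}$, and then substitute, cancel the density, and collapse the indicator difference into the interval $(Y_1,\gamma(Y_0,L)]$. The paper never writes $\dot\gamma$ down. Instead it differentiates the distributional identity $\gamma_\lambda(Y_0,L)\mid A{=}0,L \overset{d}{=} Y_1\mid A{=}0,L$ against an \emph{arbitrary} test function $\rho$, yielding for every $\rho$
\[
\E\bigl[\partial_x\rho(\gamma(Y_0,L),L)\,\partial_\lambda\gamma_\lambda(Y_0,L)\,\big|\,A{=}0,L\bigr]
= -\,\E\bigl[\score(Y_1,Y_0\mid A{=}0,L)\{\rho(\gamma(Y_0,L),L)-\rho(Y_1,L)\}\,\big|\,A{=}0,L\bigr],
\]
and then simply chooses $\partial_x\rho(x,l)=\nu(x,l)\,\partial_x\tilde g(x,\vartheta)$, so that $\rho(\gamma)-\rho(Y_1)=\int_{Y_1}^{\gamma}\nu(x,L)\,d_x\tilde g(x,\vartheta)$ falls out directly. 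This shortcut avoids ever introducing $f_{Y_1\mid A=0,L}$ and the cancellation you rely on; your route, conversely, is more concrete and makes the geometric origin of the interval transparent. Your treatment of nonsmooth $\tilde g$ via Lebesgue--Stieltjes increments is actually more careful than the paper's, which writes $\partial_\gamma\tilde g$ formally and only unpacks the Dirac-delta case in the downstream corollaries. One small point to make explicit in your write-up: your formula for $\dot\gamma$ carries the full score $S$, whereas the derivative of $F_{Y_t\mid A=0,L}$ involves only the conditional score $\score(Y_0,Y_1\mid A{=}0,L)$; the discrepancy is harmless because the residual $\score(A,L)$ piece multiplies $F_{Y_0\mid A=0,L}(y)-F_{Y_1\mid A=0,L}(\gamma(y,l))=0$, but this deserves a sentence.
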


We now specialize this result to the estimands in Corollary~\ref{cor:id_other}. Define the sign function \(\operatorname{sign}(x) = \Ind\{x > 0\} - \Ind\{x < 0\}\), and introduce the compound sign function
\[
  \chi(x, W; \gamma) \equiv \operatorname{sign}(Y_1 - \gamma(Y_0, L)) \cdot \Ind\left\{ \min(Y_1, \gamma(Y_0, L)) \le x \le \max(Y_1, \gamma(Y_0, L)) \right\},
\]
which takes values in \(\{-1, 0, 1\}\) and simplifies the representation of the EIF.

\begin{corollary}[Efficient Influence Functions for CDT and QTT] \label{cor:eif_other}
Under Assumptions~\ref{assum:setting}--\ref{assum:db} and a nonparametric model over the observed data:
\begin{enumerate}[label=(\alph*), ref=\thecorollary(\alph*)]
  \item \label{cor:eif_cd} \textbf{CDT:} For \(\vartheta^{CDT,y} = F_{Y^{a=0}_1 \mid A=1}(y)\), the EIF is
  \[
    \EIF(W;\vartheta^{CDT,y}, \eta) = \frac{A}{\pi} \left[ \Ind\{\gamma(Y_0, L) < y\} - \vartheta^{CDT,y} \right]
    + \frac{A - 1}{\pi} \nu(y, L) \chi(y, W; \gamma),
  \]
  where \(\eta = (\gamma, \nu, \pi)\). The corresponding semiparametric efficiency bound is \(\mathbb{E}\left\{ \EIF(W;\vartheta^{CDT,y}, \eta)^2 \right\}\).

  \item \label{cor:eif_q} \textbf{QTT:} For quantile level \(\tau \in (0,1)\), assuming regularity of the quantile \citep{firpo2007efficient}, the EIF for \(\vartheta^{QTT,\tau} = Q_{Y_1^{a=1} \mid A=1}(\tau) - Q_{Y_1^{a=0} \mid A=1}(\tau)\) is
  \begin{align*}
    \EIF(W;\vartheta^{QTT,\tau}, \eta) &= \frac{A}{\pi} \frac{\Ind\{Y_1 \le \vartheta_1\} - \tau}{-f_{Y_1 \mid A=1}(\vartheta_1)} - \frac{
      A\left[ \Ind\{\gamma(Y_0, L) < \vartheta_2\} - \tau \right]
      + (A - 1) \nu(\vartheta_2, L) \chi(\vartheta_2, W; \gamma)
    }{
      -\pi f_{\gamma(Y_0, L) \mid A=1}(\vartheta_2)
    },
  \end{align*}
  where \(\vartheta_1 = Q_{Y_1^{a=1} \mid A=1}(\tau)\), \(\vartheta_2 = Q_{Y_1^{a=0} \mid A=1}(\tau)\), and \(\eta = (\gamma, \nu, \pi, f_{Y_1 \mid A=1}, f_{\gamma(Y_0, L) \mid A=1})\). The corresponding semiparametric efficiency bound is \(\mathbb{E}\left\{ \EIF(W;\vartheta^{QTT,\tau}, \eta)^2 \right\}\).
\end{enumerate}
\end{corollary}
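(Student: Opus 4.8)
The plan is to obtain both parts as direct specializations of Theorem~\ref{thm:eif_gen}, so that the only real work is (i) reading off the relevant $\tilde{g}$ for each estimand, (ii) evaluating the Lebesgue--Stieltjes integral $\int_{(Y_1,\gamma(Y_0,L)]}\nu(x,L)\,\mathrm{d}_x\tilde{g}(x,\vartheta)$ when $\tilde{g}(\cdot,\vartheta)$ is a step function, and (iii) verifying that the resulting point-mass expression collapses to the compact compound-sign form $\nu(\cdot,L)\,\chi(\cdot,W;\gamma)$. First I would record the two functions: for the CDT at level $y$, $\tilde{g}(x,\vartheta)=\Ind\{x<y\}-\vartheta$, and for the counterfactual-quantile piece of the QTT, $\tilde{g}(x,\vartheta)=\Ind\{x<\vartheta\}-\tau$. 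In both cases $\tilde{g}(\cdot,\cdot)$ is right-continuous and of bounded variation in its first argument, so Theorem~\ref{thm:eif_gen} applies.

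For the CDT I would compute the two ingredients of the general formula separately. The denominator derivative is $\partial_{\vartheta'}\E[\Ind\{\gamma(Y_0,L)<y\}-\vartheta'\mid A=1]=-1$, so the denominator reduces to $\pi$. For the numerator, since $\tilde{g}(\cdot,\vartheta)$ has a single downward jump of size $1$ at $x=y$, its Lebesgue--Stieltjes measure is $\mathrm{d}_x\tilde{g}=-\delta_y$, a negative point mass, and it remains only to evaluate the signed integral over the oriented interval $(Y_1,\gamma(Y_0,L)]$.

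The key step, and the one I expect to be the main obstacle, is the orientation bookkeeping for this signed Stieltjes integral, because $\gamma(Y_0,L)$ may lie on either side of $Y_1$. I would split into the cases $Y_1<\gamma(Y_0,L)$ and $Y_1>\gamma(Y_0,L)$. In the first, the interval is positively oriented and the point mass at $y$ contributes $-\nu(y,L)\,\Ind\{Y_1<y\le\gamma(Y_0,L)\}$; in the second, the convention $\int_{(Y_1,\gamma]}=-\int_{(\gamma,Y_1]}$ flips the sign and yields $+\nu(y,L)\,\Ind\{\gamma(Y_0,L)<y\le Y_1\}$. Comparing these against $\chi(y,W;\gamma)=\operatorname{sign}(Y_1-\gamma(Y_0,L))\,\Ind\{\min(Y_1,\gamma(Y_0,L))\le y\le\max(Y_1,\gamma(Y_0,L))\}$, I would check that both expressions equal $\nu(y,L)\,\chi(y,W;\gamma)$, the only discrepancy being at the single endpoint $\{y=Y_1\}$, which has zero probability under the continuity in Assumption~\ref{assum:ct}. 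Substituting into Theorem~\ref{thm:eif_gen} then gives part~(a).

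For the QTT I would use that $\vartheta^{QTT,\tau}=\vartheta_1-\vartheta_2$ is a difference of two pathwise-differentiable parameters, so its EIF is the difference of the individual EIFs. The counterfactual-quantile term $\vartheta_2$ falls under Theorem~\ref{thm:eif_gen} with $\tilde{g}(x,\vartheta)=\Ind\{x<\vartheta\}-\tau$: the integral reduces exactly as above with $y$ replaced by $\vartheta_2$, while the denominator derivative now produces a density, $\partial_{\vartheta'}\Pr(\gamma(Y_0,L)<\vartheta'\mid A=1)=f_{\gamma(Y_0,L)\mid A=1}(\vartheta_2)$, whose existence is the regularity invoked from \citet{firpo2007efficient}. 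The observed-outcome quantile term $\vartheta_1=Q_{Y_1\mid A=1}(\tau)$ does not involve $\gamma(Y_0,L)$ and so lies outside the template of Theorem~\ref{thm:eif_gen}; here I would instead invoke the classical influence function of a subpopulation quantile, $\frac{A}{\pi}\,\frac{\Ind\{Y_1\le\vartheta_1\}-\tau}{-f_{Y_1\mid A=1}(\vartheta_1)}$, justified by differentiability of $F_{Y_1\mid A=1}$ at $\vartheta_1$. Taking the difference yields part~(b), and squaring each EIF and taking expectations gives the stated efficiency bounds.
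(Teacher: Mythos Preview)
Your proposal is correct and follows essentially the same route as the paper: both parts are read off from Theorem~\ref{thm:eif_gen} with the step-function choices of $\tilde{g}$ you wrote down, and the QTT is handled as a difference of the classical subpopulation-quantile influence function for $\vartheta_1$ and the Theorem~\ref{thm:eif_gen} specialization for $\vartheta_2$. The only cosmetic difference is that the paper evaluates the Stieltjes integral via the Dirac-delta shorthand $\mathrm{d}_x\tilde{g}=-\delta_y$ without spelling out the orientation cases, whereas you do the case split explicitly; your version is arguably more careful, and one small refinement is that the endpoint discrepancy can occur at either $\{Y_1=y\}$ or $\{\gamma(Y_0,L)=y\}$, both null under Assumption~\ref{assum:ct}, not only the former.
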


\subsection{Debiased Semiparametric Estimator}

We construct an estimator for the average treatment effect on the treated (ATT), denoted by \(\theta\). Since the efficient influence function (EIF) \(\EIF(W; \theta, \eta)\) satisfies the moment condition \(\mathbb{E}[\EIF(W; \theta, \eta)] = 0\), it can be used as a valid estimating function. Accordingly, we define \(\psi(W; \tilde{\theta}, \tilde{\eta})\) to share the same functional form as \(\EIF(W; \theta, \eta)\), where \(\tilde{\theta} \in \Theta\) and \(\tilde{\eta} \in \mathcal{H}\) represent candidate values for the target parameter and nuisance components, respectively. Estimators for more general causal estimands defined by Equation~\eqref{eq:estimand} can be constructed analogously using their corresponding EIFs from Theorem~\ref{thm:eif_gen}. For brevity, we focus here on ATT.

Our estimation strategy employs sample splitting and cross-fitting \citep{schick1986asymptotically, chernozhukov2018double} to mitigate overfitting bias. Moreover, the use of the Neyman-orthogonal EIF from Theorem~\ref{thm:eif} helps to control regularization bias arising from the estimation of complex or high-dimensional nuisance functions. The resulting estimator is semiparametric efficient and supports valid inference even when nuisance quantities are learned via flexible machine learning methods with convergence rates slower than \(\sqrt{n}\), under mild regularity conditions.

To formalize the construction, let \(\{W_i\}_{i=1}^n\) denote an i.i.d. sample. Define the index set \([n] \equiv \{1, \ldots, n\}\), and fix an integer \(K\) denoting the number of folds used for cross-fitting. Assume for simplicity that \(n\) is divisible by \(K\), and let \((\mathcal{I}_k)_{k \in [K]}\) be a random, equal-sized partition of \([n]\). For each \(k\), let \(\mathcal{I}_{(-k)} \equiv \bigcup_{k' \neq k} \mathcal{I}_{k'}\) denote the training set excluding fold \(k\).

Nuisance functions are estimated using only data from \(\mathcal{I}_{(-k)}\), yielding estimates \[\hat{\eta}_k = \hat{\eta}((W_i)_{i \in \mathcal{I}_{(-k)}}; \zeta),\] where \(\hat{\eta}\) may be obtained via machine learning or nonparametric methods, and \(\zeta\) is a tuning parameter chosen either a priori or via cross-validation. Let \(z_{\alpha}\) denote the \((1 - \alpha)\)-quantile of the standard normal distribution \(\mathcal{N}(0,1)\). The full estimation and inference procedure is summarized in Algorithm~\ref{alg:cf}.

\begin{algorithm}[ht]
  \caption{Cross-Fitted Estimator \(\hat{\theta}\) and Confidence Interval \(\CI_{n,\alpha}\)}
  \label{alg:cf}
  \KwIn{Significance level \(\alpha \in (0,1)\); data \(\{W_i\}_{i=1}^n\); number of folds \(K\); optional model selection folds \(K'\); number of repetitions \(S\).}
  \KwOut{Point estimate \(\hat{\theta}\), variance estimate \(\hat{\sigma}^2\), confidence interval \(\CI_{n,\alpha}\).}
  
  \For{\(s = 1\) \KwTo \(S\)}{
    \textbf{1.1} Randomly partition \([n]\) into \(K\) folds: \((\mathcal{I}_k^s)_{k \in [K]}\). \\
    \textbf{1.2} For each \(k \in [K]\): \\
    \Indp
    (a) Select tuning parameter \(\hat{\zeta}_k^s\) via \(K'\)-fold cross-validation on \(\mathcal{I}_{(-k)}^s\), or set to a pre-specified \(\zeta\). \\
    (b) Estimate nuisance functions: \(\hat{\eta}_k^s = \hat{\eta}((W_i)_{i \in \mathcal{I}_{(-k)}^s}; \hat{\zeta}_k^s)\). \\
    \Indm
    \textbf{1.3} Solve for \(\tilde{\theta}^s\) using:
    \[
      \sum_{k=1}^K \sum_{i \in \mathcal{I}_k^s} \psi(W_i; \tilde{\theta}^s, \hat{\eta}_k^s) = 0.
    \]\\
    \textbf{1.4} Compute variance estimate:
    \[
      \tilde{\sigma}^{2,s} = \frac{1}{n} \sum_{k=1}^K \sum_{i \in \mathcal{I}_k^s} \psi^2(W_i; \tilde{\theta}^s, \hat{\eta}_k^s).
    \]
  }

  \textbf{2. Median adjustment:}
  \[
    \hat{\theta} = \med\left(\{\tilde{\theta}^s\}_{s=1}^S\right), \quad
    \hat{\sigma}^2 = \med\left(\{\tilde{\sigma}^{2,s} + (\tilde{\theta}^s - \hat{\theta})^2\}_{s=1}^S\right).
  \]

  \textbf{3. Compute \((1-\alpha)\)-confidence interval:}
  \[
    \CI_{n,\alpha} = \left[\hat{\theta} - z_{\alpha/2} \frac{\hat{\sigma}}{\sqrt{n}}, \; \hat{\theta} + z_{\alpha/2} \frac{\hat{\sigma}}{\sqrt{n}}\right].
  \]
\end{algorithm}

\begin{remark}[Nuisance Estimators]
Let \(\hat{Q}_{Y_1 \mid A = 0, L}\) and \(\hat{F}_{Y_0 \mid A = 0, L}\) denote estimators for the conditional quantile and distribution functions, respectively. The estimated transformation is given by \(\hat{\gamma}(y, L) = \hat{Q}_{Y_1 \mid A = 0, L} \circ \hat{F}_{Y_0 \mid A = 0, L}(y)\). To estimate \(\nu(x, L)\), one may regress \(A\) on \(\hat{\gamma}(Y_0, L)\) and \(L\), and compute the implied odds. A wide range of machine learning and nonparametric methods are available for estimating conditional quantile and distribution functions, with well-established theoretical guarantees; see, for example, \citet{he1994convergence, hall1999methods, meinshausen2006quantile, ishwaran2008random, li2008nonparametric}.
\end{remark}

\subsection{Asymptotic Properties}

We now analyze the asymptotic properties of the proposed estimator for the average treatment effect on the treated (ATT), denoted by \(\theta\). These results justify using the efficient influence function (EIF) for point estimation and Wald-type inference. While we focus on ATT for brevity of exposition, analogous asymptotic guarantees extend to general estimands defined by Equation~\eqref{eq:estimand}, including quantile treatment effects, as a consequence of the Neyman orthogonality of the EIFs established in Theorem~\ref{thm:eif_gen}; see \citet{chernozhukov2018double}. Alternatively, confidence intervals can be obtained by inverting test statistics based on the EIF, as recently illustrated by \citet{lee2025inference} in the instrumental variable setting.

Recall that \(\eta = (\gamma, \nu, \pi)\) denotes the true nuisance functions, and let \(\tilde{\eta} = (\tilde{\gamma}, \tilde{\nu}, \tilde{\pi})\) represent a generic element in \(\mathcal{H} \subseteq \tilde{\mathcal{H}}\). We endow \(\tilde{\mathcal{H}}\) with the \(L^2(P)\) norm defined by:
\[
\|\tilde{\eta}\| \equiv \|\tilde{\eta}\|_{L^2(P)} = \left( \mathbb{E}\left[ \tilde{\gamma}(Y_0, L)^2 + \left\{ \tilde{\nu}(\gamma(Y_0, L), L) \right\}^2 + \tilde{\pi}^2 \right] \right)^{1/2}.
\]
We also define the norms for each component relative to their respective function spaces:
\[
\|\tilde{\gamma}\| = \left( \mathbb{E}\left[ \tilde{\gamma}(Y_0, L)^2 \right] \right)^{1/2}, \quad
\|\tilde{\nu}\| = \left( \mathbb{E}\left[ \left\{ \tilde{\nu}(\gamma(Y_0, L), L) \right\}^2 \right] \right)^{1/2}, \quad
\|\tilde{\pi}\| = |\tilde{\pi}|.
\]
We begin by formalizing conditions on the nuisance function class and the estimation rate of the nuisance estimators.

\begin{assumption}[Nuisance Function Regularity and Estimation Rates] \label{assum:regularity}
We assume:
\begin{enumerate}[label=(\alph*), ref=\theassumption(\alph*)]
  \item \label{assum:nu} For all \(\tilde{\eta} = (\tilde{\gamma}, \tilde{\nu}, \tilde{\pi}) \in \mathcal{H}\), $x \in \supp(Y_1\mid A = 0)$, and $l \in \supp(L)$, the function \(\tilde{\nu}(x, l)\) is continuously differentiable in \(x\), and there exists a constant $C$ such that \(|\partial_x\tilde{\nu}(x, l)| \le C < \infty\). 
  \item \label{assum:rate} The nuisance estimator \(\hat{\eta}_k^s\) is consistent and satisfies
  \[
  \|\hat{\eta}_k^s - \eta\| = o(n^{-1/4}).
  \]
\end{enumerate}
\end{assumption}

To clarify the robustness of the estimator, we first characterize the bias induced by deviation from the true nuisance functions via the first and second Gateaux derivatives of the moment function \(\mathbb{E}[\psi(W; \theta, \eta)]\). Let $A \lesssim B$ denote that $A \le c\cdot B$ for some constant $c > 0$.

\begin{lemma}[Bias Structure] \label{lem:bias}
Let \(\eta_\lambda = \eta + \lambda(\tilde{\eta} - \eta) = (\gamma_\lambda, \nu_\lambda, \pi_\lambda)\) for \(\lambda \in [0,1)\), and define
\[
\Phi(\lambda) \equiv \mathbb{E}\left[\psi(W; \theta, \eta_\lambda)\right].
\]
Let \(\Delta \gamma = \tilde{\gamma} - \gamma\), \(\Delta \nu = \tilde{\nu} - \nu\), and \(\Delta \pi = \tilde{\pi} - \pi\). Then, we have
\begin{enumerate}[label=(\alph*), ref=\theassumption(\alph*)]
  \item {(Neyman Orthogonality)} \quad \(\Phi'(0) = 0\).
  \item {(Second-Order Bias)}  For all $\lambda \in [0,1)$, 
  \[
    \Phi''(\lambda)\ \lesssim\ 
    \|\Delta\gamma\|\,\|\Delta\nu\|
    +\|\Delta\gamma\|^2
    +|\Delta\pi|\|\Delta\gamma\|
    +(\Delta\pi)^2 \ \lesssim\  
    \|\Delta\eta\|^2.
  \]  
\end{enumerate}
\end{lemma}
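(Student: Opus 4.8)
The plan is to isolate the single multiplicative appearance of $\pi$ and reduce both claims to differentiating the numerator. Writing
\[
S(\lam) \equiv \E\!\left[A\{Y_1 - \gamma_\lam(Y_0,L) - \theta\}\right] + \E\!\left[(1-A)\int_{Y_1}^{\gamma_\lam(Y_0,L)}\nu_\lam(x,L)\dx{x}\right],
\]
we have $\Phi(\lam) = S(\lam)/\pi_\lam$. Because $\psi(\cdot;\theta,\eta)$ is the efficient influence function of Theorem~\ref{thm:eif}, $\Phi(0) = \E[\EIF(W;\theta,\eta)] = 0$, so $S(0)=0$. The quotient rule gives $\Phi'(\lam) = S'(\lam)/\pi_\lam - S(\lam)\Delta\pi/\pi_\lam^2$, whence the $\pi$-perturbation term drops at $\lam=0$ and part (a) is equivalent to $S'(0)=0$.

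Two facts do the work. First, a balancing (Radon--Nikodym) identity: since $\nu(x,l)$ is the treated-to-control odds given $\{\gamma(Y_0,L)=x,L=l\}$, conditioning on $(\gamma(Y_0,L),L)$ yields
\[
\E\!\left[(1-A)\,\nu(\gamma(Y_0,L),L)\,\phi\right] = \E\!\left[A\,\phi\right]
\]
for any integrable $\phi$ measurable with respect to $(\gamma(Y_0,L),L)$; under Assumption~\ref{assum:ct} and Lemma~\ref{lem:QQ} the map $\gamma(\cdot,l)$ is strictly increasing, so $\Delta\gamma(Y_0,L)$ qualifies. Second, the defining Q--Q property of $\gamma$ gives $\gamma(Y_0,L)\mid A=0,L \dequiv Y_1\mid A=0,L$; hence, writing $H(\cdot,l)$ for an antiderivative of $\Delta\nu(\cdot,l)$,
\[
\E\!\left[\int_{Y_1}^{\gamma(Y_0,L)}\Delta\nu(x,L)\dx{x}\,\Big|\,A=0,L\right] = \E\!\left[H(\gamma(Y_0,L),L)-H(Y_1,L)\mid A=0,L\right] = 0.
\]

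For (a), differentiating $S$ once (Leibniz rule on the upper limit, $\plam\nu_\lam=\Delta\nu$) gives
\[
S'(0) = -\E[A\,\Delta\gamma(Y_0,L)] + \E[(1-A)\,\nu(\gamma(Y_0,L),L)\,\Delta\gamma(Y_0,L)] + \E\!\left[(1-A)\int_{Y_1}^{\gamma(Y_0,L)}\Delta\nu(x,L)\dx{x}\right].
\]
The balancing identity with $\phi=\Delta\gamma$ cancels the first two terms and the Q--Q identity kills the third, so $S'(0)=0$ and $\Phi'(0)=0$. For (b), the first expectation in $S(\lam)$ is affine in $\lam$ and contributes nothing to $S''$; applying the Leibniz and chain rules to the integrand of the second expectation, and noting $\frac{d}{d\lam}\nu_\lam(\gamma_\lam(Y_0,L),L) = \Delta\nu(\gamma_\lam(Y_0,L),L) + \pd{x}\nu_\lam(\gamma_\lam(Y_0,L),L)\Delta\gamma(Y_0,L)$, the cross term $\Delta\nu(\gamma_\lam)\Delta\gamma$ appears once from the boundary factor $\nu_\lam(\gamma_\lam)\Delta\gamma$ and once from differentiating $\int\Delta\nu$, producing the factor $2$. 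Thus
\[
S''(\lam) = \E\!\left[(1-A)\left\{2\,\Delta\gamma(Y_0,L)\,\Delta\nu(\gamma_\lam(Y_0,L),L) + \Delta\gamma(Y_0,L)^2\,\pd{x}\nu_\lam(\gamma_\lam(Y_0,L),L)\right\}\right],
\]
and dividing by $\pi_\lam$ and writing $\E[(1-A)(\cdot)]=(1-\pi)\E[\cdot\mid A=0]$ yields the stated expression.

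The main obstacle is rigorously justifying the interchanges of differentiation with expectation and with the inner integral. The Leibniz step needs the continuity of $F_{Y_t\mid A=0,L,U}$ (Assumption~\ref{assum:ct}) so the boundary evaluations are well defined, together with the continuous differentiability of $\nu_\lam$ in $x$ (Assumption~\ref{assum:nu}) and an integrable envelope dominating the difference quotients uniformly in $\lam\in[0,1)$ to move $\frac{d}{d\lam}$ inside $\E$; constructing such a dominating function is the delicate part. A secondary bookkeeping point concerns the $\pi$-perturbation: the exact quotient rule leaves residual terms $-2\Delta\pi\,S'(\lam)/\pi_\lam^2 + 2(\Delta\pi)^2 S(\lam)/\pi_\lam^3$, which vanish at $\lam=0$ (giving $\Phi''(0)=S''(0)/\pi$ exactly) and, since $S(0)=S'(0)=0$ forces both $S(\lam)$ and $S'(\lam)$ to be products of two nuisance discrepancies, are of strictly higher order than the displayed term for $\lam\in(0,1)$; the stated identity thus captures the second-order bias, as intended (and is exact when $\pi$ is held fixed at its truth, which is warranted since $\pi$ is estimable at the parametric $\sqrt{n}$ rate).
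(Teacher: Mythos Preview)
Your proof is correct and mirrors the paper's approach almost step for step: both factor $\Phi(\lambda)=S(\lambda)/\pi_\lambda$, differentiate the numerator via the Leibniz and chain rules, and verify $S'(0)=0$ using the odds-balancing identity together with $\gamma(Y_0,L)\mid A=0,L\dequiv Y_1\mid A=0,L$. You are in fact slightly more explicit than the paper in justifying the vanishing of the $\int\Delta\nu$ term and in flagging the residual $\Delta\pi$ terms in $\Phi''(\lambda)$, which the paper's own derivation (its Equation~\eqref{eq:second-diff}) retains without reconciling with the lemma's displayed formula.
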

We now present the main asymptotic result. Let \(\mathcal{P}\) denote the collection of data-generating processes that satisfy Assumptions~\ref{assum:setting}--\ref{assum:regularity}.

\begin{theorem}[Asymptotic Properties] \label{thm:asymptotic}
Under Assumptions~\ref{assum:setting}--\ref{assum:regularity}, the cross-fitted estimator \(\hat{\theta}\) constructed via Algorithm~\ref{alg:cf} satisfies, as \(n \to \infty\):
\begin{enumerate}[label=(\alph*), ref=\thetheorem(\alph*)]
  \item {Consistency:} \quad \(\hat{\theta} \pto \theta\).
  \item {Asymptotic Normality and Semiparametric Efficiency:}
  \[
  \sqrt{n}(\hat{\theta} - \theta) = \frac{1}{\sqrt{n}} \sum_{i = 1}^{n} \EIF(W_i; \theta, \eta) + o_p(1) \;\; \dto \; \mathcal{N}(0, \sigma^2),
  \]
  uniformly over \(\mathcal{P}\), where \(\sigma^2 = \mathbb{E}[\EIF(W; \theta, \eta)^2]\) achieves the semiparametric efficiency bound.
  \item {Variance Estimator Consistency:} \quad \(\hat{\sigma}^2 \pto \sigma^2\).
  \item {Uniformly Valid Confidence Interval:} \quad 
  \[
  \sup_{P \in \mathcal{P}} \left| \mathbb{P}_P\left(\theta \in \CI_{n,\alpha} \right) - (1 - \alpha) \right| \to 0.
  \]
\end{enumerate}
\end{theorem}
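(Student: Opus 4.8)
\section*{Proof proposal for Theorem~\ref{thm:asymptotic}}

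The plan is to follow the cross-fitted debiased machine learning template of \citet{chernozhukov2018double}, exploiting the fact that the ATT estimating function is affine in its first argument: since $\partial_{\tilde\theta}\psi(W;\tilde\theta,\tilde\eta) = -A/\tilde\pi$ and $\psi(W;\theta,\eta)=\EIF(W;\theta,\eta)$, the estimating equation in Step~1.3 has the explicit solution
\[
\tilde\theta^s - \theta \;=\; \frac{\sum_{k=1}^K\sum_{i\in\mathcal{I}_k^s}\psi(W_i;\theta,\hat\eta_k^s)}{\sum_{k=1}^K\sum_{i\in\mathcal{I}_k^s}A_i/\hat\pi_k^s}.
\]
The denominator, divided by $n$, converges in probability to $\E[A/\pi]=1$ by consistency of $\hat\pi_k^s$ and the law of large numbers, so by Slutsky it suffices to analyze the numerator. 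For that I would insert, fold by fold, the canonical decomposition
\[
\frac{1}{\sqrt n}\sum_{k=1}^K\sum_{i\in\mathcal{I}_k^s}\psi(W_i;\theta,\hat\eta_k^s) \;=\; \frac{1}{\sqrt n}\sum_{i=1}^n\EIF(W_i;\theta,\eta) \;+\; R_1^s \;+\; R_2^s,
\]
where $R_1^s$ is an empirical-process remainder and $R_2^s$ collects the deterministic per-fold bias.

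The two remainders are dispatched by the two pillars of the argument. For $R_1^s$, cross-fitting is essential: conditional on the held-out data $\mathcal{I}_{(-k)}^s$, the summands $\psi(W_i;\theta,\hat\eta_k^s)-\EIF(W_i;\theta,\eta)$ are i.i.d., and after centering their conditional variance is at most $\E[\{\psi(W;\theta,\hat\eta_k^s)-\psi(W;\theta,\eta)\}^2\mid\mathcal{I}_{(-k)}^s]$, which tends to $0$ because $\psi(W;\theta,\cdot)$ is $L^2(P)$-continuous at $\eta$ (Assumption~\ref{assum:nu} and dominated convergence control the Lebesgue--Stieltjes term) and $\|\hat\eta_k^s-\eta\|_{L^2(P)}\pto 0$ by Assumption~\ref{assum:rate}; a conditional Chebyshev inequality then gives $R_1^s=o_p(1)$. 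For $R_2^s$, each fold contributes $\sqrt{|\mathcal{I}_k^s|}\,\{\Phi(1)-\Phi(0)\}$ with $\Phi$ formed from $\hat\eta_k^s$; a second-order Taylor expansion and Lemma~\ref{lem:bias} give $\Phi(1)-\Phi(0)=\Phi'(0)+\tfrac12\Phi''(\bar\lambda)=\tfrac12\Phi''(\bar\lambda)$, and the explicit expression for $\Phi''$ bounded by Cauchy--Schwarz yields $|\Phi(1)-\Phi(0)|\lesssim\|\Delta\gamma\|_{L^2(P)}\|\Delta\nu\|_{L^2(P)}+\|\Delta\gamma\|_{L^2(P)}^2\le\|\hat\eta_k^s-\eta\|_{L^2(P)}^2=o(n^{-1/2})$ by Assumption~\ref{assum:rate}, so $R_2^s=o_p(1)$. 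This establishes the asymptotically linear representation $\sqrt n(\tilde\theta^s-\theta)=n^{-1/2}\sum_i\EIF(W_i;\theta,\eta)+o_p(1)$; the Lindeberg CLT applied to the mean-zero, square-integrable influence function gives $\sqrt n(\tilde\theta^s-\theta)\dto\N(0,\sigma^2)$ with $\sigma^2=\E[\EIF(W;\theta,\eta)^2]$, and consistency~(a) is immediate.

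It then remains to pass to the median-aggregated $\hat\theta$, its variance estimator, and to upgrade every statement to hold uniformly over $\mathcal{P}$. Since $S$ is fixed and every $\tilde\theta^s$ shares the identical leading term $n^{-1/2}\sum_i\EIF(W_i;\theta,\eta)$ up to an $o_p(n^{-1/2})$ remainder, the finite median differs from any single $\tilde\theta^s$ by $o_p(n^{-1/2})$, so $\hat\theta$ inherits the same representation, which is~(b). For~(c), $\tilde\sigma^{2,s}\pto\sigma^2$ follows from the consistency of $\tilde\theta^s$ and $\hat\eta_k^s$ together with $L^2$-continuity of $\psi^2$, while the correction term satisfies $(\tilde\theta^s-\hat\theta)^2=O_p(n^{-1})\pto0$, so $\hat\sigma^2\pto\sigma^2$; part~(d) is then a Slutsky consequence of~(b) and~(c). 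The main obstacle is the \emph{uniform} validity over $\mathcal{P}$: each of the preceding limits must be made uniform, which I would obtain by replacing pointwise convergence with the triangular-array moment bounds of \citet{chernozhukov2018double}, using that Assumption~\ref{assum:rate} furnishes a uniform $o(n^{-1/4})$ rate and that $\sigma^2$ is bounded away from $0$ and $\infty$ on $\mathcal{P}$. A uniform Lindeberg--Feller CLT for the influence function, combined with P\'olya's theorem (uniform convergence of the standardized distribution functions to the continuous Gaussian limit) and the uniform consistency of $\hat\sigma^2$, then yields $\sup_{P\in\mathcal{P}}|\Prob_P(\theta\in\CI_{n,\alpha})-(1-\alpha)|\to0$. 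Verifying the uniform integrability and the uniform decay of the empirical-process remainder $R_1^s$ is where most of the careful bookkeeping will reside.
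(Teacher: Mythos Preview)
Your proposal is correct and follows essentially the same route as the paper: both rest on the debiased machine-learning framework of \citet{chernozhukov2018double}, with Neyman orthogonality and the second-order bias supplied by Lemma~\ref{lem:bias} and the $o(n^{-1/4})$ rate from Assumption~\ref{assum:rate}. The only difference is presentational---the paper's proof simply checks that Assumptions~3.1 and~3.2 of \citet{chernozhukov2018double} hold (using Equation~\eqref{eq:second-diff} for the twice-differentiability condition) and then invokes their Theorems~3.1--3.2 and Corollaries~3.1--3.3 wholesale, whereas you unpack those results by hand via the explicit affine solution and the $R_1^s/R_2^s$ decomposition.
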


Hypothesis testing for $\theta$ can proceed via the constructed confidence interval. Bootstrap procedures could offer further refinements; we leave their investigation to future work.

\section{Simulation}
\paragraph{Data-Generating Process}  
We simulate data from the semiparametric transformation model introduced in Example~\ref{eg:stm} to assess the performance of various estimators. Complete simulation specifications, including all functional forms and parameter settings, are provided in Supplementary Material~\ref{appendix:sim}.

Covariates \( L \in \mathbb{R}^p \) with \( p = 6 \) are independently drawn from a standard multivariate normal distribution. Treatment assignment is governed by a nonlinear function of both observed and unobserved variables:
\begin{equation}\label{eq:simu-1}
  A \sim \mathrm{Bernoulli}(\expit(\phi)), \quad \phi = c_1 + \frac{c_2}{\sqrt{p}} \sum_{j=1}^p (L_j^2 - 1) + U_1 + U_2^2,
\end{equation}
where \( U_1, U_2 \sim \mathcal{N}(0,1) \) are independent unmeasured confounders, and \( \expit(x) \equiv 1/(1 + e^{-x}) \) denotes the logistic function.

Potential outcomes under no treatment evolve over two time periods \( t = 0,1 \) according to:
\begin{equation}\label{eq:simu-2}
  Y_t^{a=0} = \beta_t\bigl(k_t(L) + m(U, L) + \varepsilon_t\bigr), \quad \beta_t(y) = \frac{1}{1 + \exp(-y / \alpha_t)},
\end{equation}
where \( \varepsilon_t \sim \mathcal{N}(0,1) \) are independent error terms, and \( U \equiv (U_1, U_2) \). The functions \( k_t(\cdot) \) and \( m(\cdot, \cdot) \) capture nonlinear relationships with covariates and unobservables.

To simplify exposition, we impose the sharp null hypothesis for treated potential outcomes: \( Y_t^{a=1} = Y_t^{a=0} \). Under this assumption, the true average treatment effect on the treated (ATT) is zero in the simulated data-generating process.

\paragraph{Setup}  
We simulate datasets of sizes \( n = 500,\, 1000,\, 2000 \), repeating each scenario 1000 times. The following three estimators are compared:

\begin{enumerate}
  \item \texttt{Debiased CiC}: Our proposed estimator, where nuisance functions are estimated using random forests \citep{breiman2001random, generalized2019athey}. We implement five-fold cross-fitting and use five-fold cross-validation for model selection. Median adjustment is applied with \( S = 20 \) repetitions.
  
  \item \texttt{CiC}: To heuristically illustrate the role of debiasing in the presence of covariates, we evaluate a variant that uses only the identification formula. Specifically, Step 1.3 in Algorithm~\ref{alg:cf} is replaced with the estimating equation
  \[
  \sum_{k=1}^K \sum_{i \in \mathcal{I}_k^s} \check\psi(W_i; \tilde{\theta}^s, \hat{\eta}_k^s) = 0,
  \]
  where \( \check\psi(W; \theta, \eta) = \frac{A}{\pi} \left\{ Y_1 - \gamma(Y_0, L) - \theta \right\} \) corresponds to the identification formula without correction. Step 1.4 is skipped, and the variance estimator \( \hat{\sigma}^2 \) from \texttt{Debiased CiC} is reused, since \texttt{CiC} lacks a valid standard error estimator when covariates are included. As shown in Table \ref{tab:simresults}, the standard deviations of \texttt{CiC} and \texttt{Debiased CiC} are nearly identical, supporting this heuristic. All other steps mirror those of \texttt{Debiased CiC}.
  
  \item \texttt{DiD}: A conventional difference-in-differences estimator assuming conditional parallel trends, implemented using the \texttt{did} package \citep{callaway2021did} with its default parametric linear specification.
\end{enumerate}

\begin{center}
\begin{minipage}[b]{0.48\textwidth}
  \vspace*{0pt}
    \centering
    \includegraphics[width=0.9\linewidth]{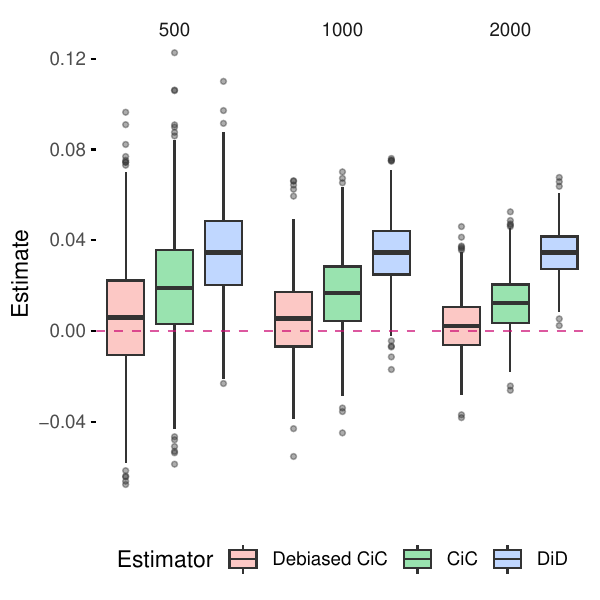}
    \captionof{figure}{Boxplots of ATT estimates from 1{,}000 simulation replications at sample sizes \( n = 500, 1000, 2000 \), comparing the proposed \texttt{Debiased CiC} estimator with \texttt{CiC} and \texttt{DiD}. The dashed red line marks the true ATT, set to zero by design. The \texttt{Debiased CiC} estimator consistently concentrates around the truth with minimal bias, while \texttt{DiD} remains biased and \texttt{CiC} shows slow bias decay.}
    \label{fig:simboxplot}
\end{minipage}
\hfill
\begin{minipage}[b]{0.48\textwidth}
  \vspace*{0pt} 
  \centering
    \footnotesize
    \begin{tabular}{@{}clclc@{}}
        \toprule
        \textbf{$n$} & \textbf{Metric} & \textbf{Debiased CiC} & \textbf{CiC} & \textbf{DiD} \\
        \midrule
        \multirow{4}{*}{500}  & Coverage & 0.940 & 0.865 & 0.601 \\
                              & Bias     & 0.006 & 0.019 & 0.035 \\
                              & SE       & 0.024 & 0.024 & 0.020 \\
                              & SD       & 0.025 & 0.025 & 0.021 \\
        \midrule
        \multirow{4}{*}{1000} & Coverage & 0.937 & 0.809 & 0.341 \\
                              & Bias     & 0.005 & 0.016 & 0.034 \\
                              & SE       & 0.017 & 0.017 & 0.014 \\
                              & SD       & 0.018 & 0.018 & 0.014 \\
        \midrule
        \multirow{4}{*}{2000} & Coverage & 0.935 & 0.808 & 0.083 \\
                              & Bias     & 0.002 & 0.012 & 0.034 \\
                              & SE       & 0.012 & 0.012 & 0.010 \\
                              & SD       & 0.012 & 0.012 & 0.010 \\
        \bottomrule
    \end{tabular}
    \vspace{2.5em}
    \captionof{table}{Quantitative comparison of estimator performance across simulation settings, showing coverage at the 95\% nominal level, bias, average asymptotic standard error (SE), and empirical standard deviation (SD). The \texttt{Debiased CiC} estimator maintains accurate coverage and low bias across all sample sizes, while \texttt{CiC} and \texttt{DiD} suffer from undercoverage due to slower bias reduction and misspecification, respectively.}
    \label{tab:simresults}
\end{minipage}
\end{center}

\paragraph{Results}  
Table~\ref{tab:simresults} and Figure~\ref{fig:simboxplot} summarize the performance of the estimators across key metrics: coverage probability at the 95\% nominal level, bias, average estimated asymptotic standard error (SE), and empirical standard deviation (SD) of the estimates across 1000 replications.

The \texttt{Debiased CiC} estimator consistently achieves coverage close to the nominal level (93.5\% - 94.0\%), exhibits negligible bias, and produces standard error estimates that closely align with empirical standard deviations—even at smaller sample sizes. 

In contrast, both \texttt{CiC} and \texttt{DiD} exhibit deteriorating coverage as \( n \) increases. The \texttt{DiD} estimator suffers from persistent bias stemming from violations of the parallel trends assumption, resulting in coverage probabilities that approach zero at larger sample sizes. While the \texttt{CiC} estimator shows diminishing bias as \( n \) increases, the convergence is too slow to ensure valid inference, leading to substantial undercoverage across all sample sizes considered ($80.8\% - 86.5\%$).

Overall, these results highlight the advantages of the proposed \texttt{Debiased CiC} estimator in delivering reliable inference and low bias under complex, nonlinear, and confounded data-generating processes.

\section{Application: Mass Shootings and Electoral Outcomes}\label{sec:app}

\paragraph{Background} 
Mass shootings are among the most visible and devastating forms of gun violence in the United States, inflicting profound harm on communities and dominating national political discourse \citep{peterson2024epidemiology}. Despite widespread public support for gun control, policy responses remain limited and inconsistent. This disconnect between tragedy and legislative action presents a central puzzle in American politics \citep{hassell2025navigating}.
These events, amplified by media coverage, often galvanize political attention and mobilize voters, making them more than isolated tragedies. They can shift public attitudes, influence political behavior, and alter electoral incentives. Studying their electoral impact thus provides a lens into broader structural forces shaping gun policy. 

Prior studies use panel data and difference-in-differences (DiD) designs to estimate these effects, but results are mixed \citep{hassell2020mobilize, yousaf2021sticking, garcia2022violence, hassell2025navigating}. Many rely on the strong parallel trends assumption and face challenges from confounding. Counties vary across both observable traits (e.g., demographics, economics, geography) and characteristics that are difficult to measure or quantify (e.g., media narratives, community activism, local gun culture). While the former can be controlled for, the latter are often high-dimensional and may influence outcomes in complex, nonlinear ways.

We address these challenges with the proposed approach that offers two key advantages: (1) it relaxes the parallel trends assumption, enabling more credible causal inference; and (2) it leverages machine learning to incorporate a rich set of covariates, thereby mitigating confounding bias. The empirical analysis below demonstrates the practical value of this methodology in addressing urgent, real-world policy questions.

\paragraph{Data Description \& Setup} We re-analyze the dataset from \citet{yousaf2021sticking, yousaf2022replication}, extending the original specification by incorporating a richer set of relevant covariates. The sample consists of approximately 3,000 U.S. counties observed in the 2004 (pre-treatment) and 2008 (post-treatment) presidential election years, a cycle selected for its consistency and completeness across data sources. The treatment indicator equals one for counties that experienced a mass shooting—successful or failed—between 2005 and 2008, where a successful mass shooting follows the FBI definition of an event resulting in four or more deaths at a single location. In total, 65 counties were classified as treated. The outcome is the county-level Republican vote share in the presidential elections. Covariates, measured prior to treatment, include economic indicators (household income, Gini coefficient), demographic characteristics (population size, proportion never married, racial Herfindahl-Hirschman index), education (proportion of college dropouts), health measures (proportion of residents reporting mental health issues), and geographic location (state).

\paragraph{Estimation \& Results}

We estimate the effect of mass shootings using two methods: (a) the proposed Changes-in-Changes (\texttt{Debiased CiC}) estimator and (b) the standard Difference-in-Differences (\texttt{DiD}) estimator, which relies on the (conditional) parallel trends assumption and is implemented in the \textsf{R} package \texttt{did} \citep{callaway2021did}. For each method, we consider two specifications: with and without covariates. Notably, \texttt{Debiased CiC} without covariates is asymptotically equivalent to the CiC estimator described in Section 5 of \citet{athey2006identification}.

Table~\ref{tab:results} reports point estimates, standard errors (SE), and 95\% confidence intervals for the ATT using the four estimators described above. Our covariate-adjusted \texttt{Debiased CiC} estimate — smaller in magnitude and statistically indistinguishable from zero — suggests a muted electoral response. A comparison across estimators highlights three key insights:

First, \texttt{Debiased CiC} consistently yields smaller estimated effects than \texttt{DiD}, both with and without covariates. This attenuation supports the concern that \texttt{DiD} may overstate treatment effects when the parallel trends assumption is violated. In contrast, \texttt{Debiased CiC} relaxes this assumption, allowing for more flexible, nonlinear trends, and potentially producing more credible estimates.

Second, covariate adjustment substantially influences the estimated effects. Without covariates, the ATT is estimated at $-2.60\%$ under \texttt{DiD} and $-1.63\%$ under \texttt{Debiased CiC}. Including covariates reduces the magnitudes of both estimates by over one percentage point, underscoring the importance of addressing confounding.

Third, \texttt{Debiased CiC} with covariates achieves a lower standard error (SE = 0.46\%) compared to that without covariates (SE = 0.53\%) and is comparable to \texttt{DiD} with covariates (SE = 0.47\%). Notably, \texttt{DiD} often adopts a parametric linear model by default, such as in the \texttt{did} package, which may not efficiently handle high-cardinality categorical variables like state (with 50 levels). This can inflate variance and reduce efficiency. By contrast, our \texttt{Debiased CiC} framework leverages machine learning to flexibly and adaptively incorporate rich covariate information, improving precision without imposing restrictive functional form assumptions.

\begin{table}[ht]
\centering
\caption{Estimates of mass shootings' effect on Republican vote share (percentage points).}\label{tab:results}
\begin{tabular}{lrrr}
\toprule
Estimator & ATT (\%)& SE (\%) & 95\% CI (\%) \\
\midrule
\texttt{Debiased CiC} (with covariates, ours) & $-0.47$ & 0.46 & $[-1.37,\; \phantom{-}0.43]$ \\
\texttt{DiD} (with covariates)        & $-0.56$ & 0.47 & $[-1.49,\; \phantom{-}0.37]$ \\
\texttt{Debiased CiC} (no covariates) & $-1.63$ & 0.53 & $[-2.66,\; -0.59]$ \\
\texttt{DiD} (no covariates)         & $-2.60$ &  $0.46$ & $[-3.56,\; -1.74]$ \\
\bottomrule
\end{tabular}
\end{table}

\section{Discussion}

This paper introduces a novel extension of the Changes-in-Changes (CiC) framework that accommodates high-dimensional, non-monotonic unmeasured confounding and allows for continuous covariates. By deriving the efficient influence function and constructing Neyman-orthogonal estimators, we provide a semiparametrically efficient estimation strategy that supports valid inference with flexible, machine learning-based nuisance estimation. This work fills a key methodological gap in the Difference-in-Differences literature, where CiC has served as a foundational tool.

While we focused on the two-period setting with continuous outcomes, the proposed method can be adapted to multiple time periods, discrete outcomes, and staggered adoption designs. Future work may also extend our approach to enhance existing CiC-based methods—such as those for mediation \citep{huber2022direct} or multivariate outcomes \citep{torous2024optimal}—which often assume no covariates.

When multiple pre-treatment periods are available, the key distributional bridge assumption can, in principle, be assessed empirically, providing a basis for a falsification or placebo test. We leave the development of formal testing procedures to future work. While our simulations highlight the robustness and efficiency of the proposed estimator, careful implementation and sensitivity analyses remain essential in applied settings.

\section*{Acknowledgement}
The authors would like to acknowledge the National Institutes of Health (NIH) for their generous funding and support.

\section*{Supplementary Material} \label{SM}

Supplementary Material provides additional technical results, complete proofs, and further simulation details. The data used in the application section are publicly available from \citet{yousaf2022replication}. Code for replication is available from the first author upon request.

\bibliographystyle{plainnat}
\bibliography{mainbib}

\begin{thebibliography}{62}
\providecommand{\natexlab}[1]{#1}
\providecommand{\url}[1]{\texttt{#1}}
\expandafter\ifx\csname urlstyle\endcsname\relax
  \providecommand{\doi}[1]{doi: #1}\else
  \providecommand{\doi}{doi: \begingroup \urlstyle{rm}\Url}\fi

\bibitem[Abadie(2005)]{abadie2005semiparametric}
Alberto Abadie.
\newblock Semiparametric difference-in-differences estimators.
\newblock \emph{The Review of Economic Studies}, 72\penalty0 (1):\penalty0 1--19, 2005.

\bibitem[Angrist and Pischke(2009)]{angrist2009mostly}
Joshua~D Angrist and J{\"o}rn-Steffen Pischke.
\newblock \emph{Mostly Harmless Econometrics: An Empiricist's Companion}.
\newblock Princeton University Press, 2009.

\bibitem[Angus(1994)]{angus1994probability}
John~E Angus.
\newblock The probability integral transform and related results.
\newblock \emph{SIAM Review}, 36\penalty0 (4):\penalty0 652--654, 1994.

\bibitem[Arkhangelsky and Imbens(2024)]{arkhangelsky2024causal}
Dmitry Arkhangelsky and Guido Imbens.
\newblock Causal models for longitudinal and panel data: A survey.
\newblock \emph{The Econometrics Journal}, 27\penalty0 (3):\penalty0 C1--C61, 2024.

\bibitem[Athey and Imbens(2006)]{athey2006identification}
Susan Athey and Guido~W Imbens.
\newblock Identification and inference in nonlinear difference-in-differences models.
\newblock \emph{Econometrica}, 74\penalty0 (2):\penalty0 431--497, 2006.

\bibitem[Athey et~al.(2019)Athey, Tibshirani, and Wager]{generalized2019athey}
Susan Athey, Julie Tibshirani, and Stefan Wager.
\newblock Generalized random forests.
\newblock \emph{The Annals of Statistics}, 47\penalty0 (2):\penalty0 1148--1178, 2019.

\bibitem[Baker et~al.(2025)Baker, Callaway, Cunningham, Goodman-Bacon, and Sant'Anna]{baker2025difference}
Andrew Baker, Brantly Callaway, Scott Cunningham, Andrew Goodman-Bacon, and Pedro~HC Sant'Anna.
\newblock Difference-in-differences designs: A practitioner's guide.
\newblock \emph{arXiv preprint arXiv:2503.13323}, 2025.

\bibitem[Bickel et~al.(1993)Bickel, Klaassen, Bickel, Ritov, Klaassen, Wellner, and Ritov]{bickel1993efficient}
Peter~J Bickel, Chris~AJ Klaassen, Peter~J Bickel, Ya'acov Ritov, J~Klaassen, Jon~A Wellner, and YA'Acov Ritov.
\newblock \emph{Efficient and Adaptive Estimation for Semiparametric Models}, volume~4.
\newblock Johns Hopkins University Press Baltimore, 1993.

\bibitem[Bonhomme and Sauder(2011)]{bonhomme2011recovering}
St{\'e}phane Bonhomme and Ulrich Sauder.
\newblock Recovering distributions in difference-in-differences models: A comparison of selective and comprehensive schooling.
\newblock \emph{Review of Economics and Statistics}, 93\penalty0 (2):\penalty0 479--494, 2011.

\bibitem[Breiman(2001)]{breiman2001random}
Leo Breiman.
\newblock Random forests.
\newblock \emph{Machine Learning}, 45\penalty0 (1):\penalty0 5--32, 2001.

\bibitem[Brenier(1991)]{brenier1991polar}
Yann Brenier.
\newblock Polar factorization and monotone rearrangement of vector-valued functions.
\newblock \emph{Communications on Pure and Applied Mathematics}, 44\penalty0 (4):\penalty0 375--417, 1991.

\bibitem[Callaway and Li(2019)]{callaway2019quantile}
Brantly Callaway and Tong Li.
\newblock Quantile treatment effects in difference in differences models with panel data.
\newblock \emph{Quantitative Economics}, 10\penalty0 (4):\penalty0 1579--1618, 2019.

\bibitem[Callaway and Sant'Anna(2021{\natexlab{a}})]{callaway2021did}
Brantly Callaway and Pedro~H.C. Sant'Anna.
\newblock did: Difference in differences, 2021{\natexlab{a}}.
\newblock URL \url{https://bcallaway11.github.io/did/}.
\newblock R package version 2.1.2.

\bibitem[Callaway and Sant'Anna(2021{\natexlab{b}})]{callaway2021difference}
Brantly Callaway and Pedro~HC Sant'Anna.
\newblock Difference-in-differences with multiple time periods.
\newblock \emph{Journal of Econometrics}, 225\penalty0 (2):\penalty0 200--230, 2021{\natexlab{b}}.

\bibitem[Chernozhukov et~al.(2018)Chernozhukov, Chetverikov, Demirer, Duflo, Hansen, Newey, and Robins]{chernozhukov2018double}
Victor Chernozhukov, Denis Chetverikov, Mert Demirer, Esther Duflo, Christian Hansen, Whitney Newey, and James Robins.
\newblock Double/debiased machine learning for treatment and structural parameters.
\newblock \emph{The Econometrics Journal}, 21\penalty0 (1):\penalty0 C1--C68, 2018.

\bibitem[Chiu et~al.(2023)Chiu, Lan, Liu, and Xu]{chiu2023causal}
Albert Chiu, Xingchen Lan, Ziyi Liu, and Yiqing Xu.
\newblock Causal panel analysis under parallel trends: lessons from a large reanalysis study.
\newblock \emph{American Political Science Review}, pages 1--22, 2023.

\bibitem[Cui et~al.(2024)Cui, Pu, Shi, Miao, and Tchetgen~Tchetgen]{cui2024semiparametric}
Yifan Cui, Hongming Pu, Xu~Shi, Wang Miao, and Eric Tchetgen~Tchetgen.
\newblock Semiparametric proximal causal inference.
\newblock \emph{Journal of the American Statistical Association}, 119\penalty0 (546):\penalty0 1348--1359, 2024.

\bibitem[Ding and Li(2019)]{ding2019bracketing}
Peng Ding and Fan Li.
\newblock A bracketing relationship between difference-in-differences and lagged-dependent-variable adjustment.
\newblock \emph{Political Analysis}, 27\penalty0 (4):\penalty0 605--615, 2019.

\bibitem[Fan and Yu(2012)]{fan2012partial}
Yanqin Fan and Zhengfei Yu.
\newblock Partial identification of distributional and quantile treatment effects in difference-in-differences models.
\newblock \emph{Economics Letters}, 115\penalty0 (3):\penalty0 511--515, 2012.

\bibitem[Fern{\'a}ndez-Val et~al.(2024)Fern{\'a}ndez-Val, Meier, van Vuuren, and Vella]{fernandez2024distribution}
Iv{\'a}n Fern{\'a}ndez-Val, Jonas Meier, Aico van Vuuren, and Francis Vella.
\newblock Distribution regression difference-in-differences.
\newblock \emph{arXiv preprint arXiv:2409.02311}, 2024.

\bibitem[Firpo(2007)]{firpo2007efficient}
Sergio Firpo.
\newblock Efficient semiparametric estimation of quantile treatment effects.
\newblock \emph{Econometrica}, 75\penalty0 (1):\penalty0 259--276, 2007.

\bibitem[Garc{\'\i}a-Montoya et~al.(2022)Garc{\'\i}a-Montoya, Arjona, and Lacombe]{garcia2022violence}
Laura Garc{\'\i}a-Montoya, Ana Arjona, and Matthew Lacombe.
\newblock Violence and voting in the united states: How school shootings affect elections.
\newblock \emph{American Political Science Review}, 116\penalty0 (3):\penalty0 807--826, 2022.

\bibitem[Hall et~al.(1999)Hall, Wolff, and Yao]{hall1999methods}
Peter Hall, Rodney~CL Wolff, and Qiwei Yao.
\newblock Methods for estimating a conditional distribution function.
\newblock \emph{Journal of the American Statistical Association}, 94\penalty0 (445):\penalty0 154--163, 1999.

\bibitem[Hassell and Holbein(2025)]{hassell2025navigating}
Hans~JG Hassell and John~B Holbein.
\newblock Navigating potential pitfalls in difference-in-differences designs: Reconciling conflicting findings on mass shootings' effect on electoral outcomes.
\newblock \emph{American Political Science Review}, 119\penalty0 (1):\penalty0 240--260, 2025.

\bibitem[Hassell et~al.(2020)Hassell, Holbein, and Baldwin]{hassell2020mobilize}
Hans~JG Hassell, John~B Holbein, and Matthew Baldwin.
\newblock Mobilize for our lives? {S}chool shootings and democratic accountability in {US} elections.
\newblock \emph{American Political Science Review}, 114\penalty0 (4):\penalty0 1375--1385, 2020.

\bibitem[He and Shi(1994)]{he1994convergence}
Xuming He and Peide Shi.
\newblock Convergence rate of {B}-spline estimators of nonparametric conditional quantile functions.
\newblock \emph{Journal of Nonparametric Statistics}, 3\penalty0 (3-4):\penalty0 299--308, 1994.

\bibitem[Heckman et~al.(1997)Heckman, Ichimura, and Todd]{heckman1997matching}
James~J Heckman, Hidehiko Ichimura, and Petra~E Todd.
\newblock Matching as an econometric evaluation estimator: Evidence from evaluating a job training programme.
\newblock \emph{The Review of Economic Studies}, 64\penalty0 (4):\penalty0 605--654, 1997.

\bibitem[Hern{\'a}n and Robins(2020)]{hernan2020causal}
Miguel~A Hern{\'a}n and James~M Robins.
\newblock \emph{Causal Inference: What If}.
\newblock CRC Boca Raton, FL, 2020.

\bibitem[Huber et~al.(2022)Huber, Schelker, and Strittmatter]{huber2022direct}
Martin Huber, Mark Schelker, and Anthony Strittmatter.
\newblock Direct and indirect effects based on changes-in-changes.
\newblock \emph{Journal of Business \& Economic Statistics}, 40\penalty0 (1):\penalty0 432--443, 2022.

\bibitem[Ishwaran et~al.(2008)Ishwaran, Kogalur, Blackstone, and Lauer]{ishwaran2008random}
Hemant Ishwaran, Udaya~B. Kogalur, Eugene~H. Blackstone, and Michael~S. Lauer.
\newblock {Random survival forests}.
\newblock \emph{The Annals of Applied Statistics}, 2\penalty0 (3):\penalty0 841 -- 860, 2008.

\bibitem[Lee et~al.(2025)Lee, Yu, Liu, Park, Zhang, Robins, and Tchetgen~Tchetgen]{lee2025inference}
Yonghoon Lee, Mengxin Yu, Jiewen Liu, Chan Park, Yunshu Zhang, James~M Robins, and Eric~J Tchetgen~Tchetgen.
\newblock Inference on nonlinear counterfactual functionals under a multiplicative {IV} model.
\newblock \emph{arXiv preprint arXiv:2507.15612}, 2025.

\bibitem[Li and Racine(2008)]{li2008nonparametric}
Qi~Li and Jeffrey~S Racine.
\newblock Nonparametric estimation of conditional cdf and quantile functions with mixed categorical and continuous data.
\newblock \emph{Journal of Business \& Economic Statistics}, 26\penalty0 (4):\penalty0 423--434, 2008.

\bibitem[McCann(1995)]{mccann1995existence}
Robert~J McCann.
\newblock Existence and uniqueness of monotone measure-preserving maps.
\newblock \emph{Duke Mathematical Journal}, 80\penalty0 (2):\penalty0 309--323, 1995.

\bibitem[Meinshausen and Ridgeway(2006)]{meinshausen2006quantile}
Nicolai Meinshausen and Greg Ridgeway.
\newblock Quantile regression forests.
\newblock \emph{Journal of Machine Learning Research}, 7\penalty0 (6), 2006.

\bibitem[Melly and Santangelo(2015)]{melly2015changes}
Blaise Melly and Giulia Santangelo.
\newblock The changes-in-changes model with covariates.
\newblock \emph{Universit{\"a}t Bern, Bern}, 720:\penalty0 721, 2015.

\bibitem[Miao et~al.(2024)Miao, Shi, Li, and Tchetgen~Tchetgen]{miao2024confounding}
Wang Miao, Xu~Shi, Yilin Li, and Eric~J Tchetgen~Tchetgen.
\newblock A confounding bridge approach for double negative control inference on causal effects.
\newblock \emph{Statistical Theory and Related Fields}, 8\penalty0 (4):\penalty0 262--273, 2024.

\bibitem[Newey(1990)]{newey1990semiparametric}
Whitney~K Newey.
\newblock Semiparametric efficiency bounds.
\newblock \emph{Journal of Applied Econometrics}, 5\penalty0 (2):\penalty0 99--135, 1990.

\bibitem[Newey(1994)]{newey1994asymptotic}
Whitney~K Newey.
\newblock The asymptotic variance of semiparametric estimators.
\newblock \emph{Econometrica}, pages 1349--1382, 1994.

\bibitem[Neyman(1959)]{neyman1959optimal}
Jerzy Neyman.
\newblock Optimal asymptotic tests of composite statistical hypotheses.
\newblock In Ulf Grenander, editor, \emph{Probability and Statistics}, pages 13--34. John Wiley \& Sons, New York, 1959.

\bibitem[Park and Tchetgen~Tchetgen(2022)]{park2022universal}
Chan Park and Eric~J Tchetgen~Tchetgen.
\newblock A universal difference-in-differences approach for causal inference.
\newblock \emph{arXiv preprint arXiv:2212.13641}, 2022.

\bibitem[Pearl(2009)]{pearl2009causality}
Judea Pearl.
\newblock \emph{Causality}.
\newblock Cambridge University Press, 2009.

\bibitem[Peterson et~al.(2024)Peterson, Densley, Hauf, and Moldenhauer]{peterson2024epidemiology}
Jillian~K Peterson, James~A Densley, Molly Hauf, and Jack Moldenhauer.
\newblock Epidemiology of mass shootings in the {U}nited {S}tates.
\newblock \emph{Annual Review of Clinical Psychology}, 20, 2024.

\bibitem[Piccininni et~al.(2025)Piccininni, Tchetgen~Tchetgen, and Stensrud]{piccininni2025refining}
Marco Piccininni, Eric~J Tchetgen~Tchetgen, and Mats~J Stensrud.
\newblock Refining the notion of no anticipation in difference-in-differences studies.
\newblock \emph{arXiv preprint arXiv:2507.12891}, 2025.

\bibitem[Puhani(2012)]{puhani2012treatment}
Patrick~A Puhani.
\newblock The treatment effect, the cross difference, and the interaction term in nonlinear “difference-in-differences” models.
\newblock \emph{Economics Letters}, 115\penalty0 (1):\penalty0 85--87, 2012.

\bibitem[Rambachan and Roth(2023)]{rambachan2023more}
Ashesh Rambachan and Jonathan Roth.
\newblock A more credible approach to parallel trends.
\newblock \emph{Review of Economic Studies}, 90\penalty0 (5):\penalty0 2555--2591, 2023.

\bibitem[Richardson et~al.(2023)Richardson, Ye, and Tchetgen~Tchetgen]{richardson2023generalized}
David~B Richardson, Ting Ye, and Eric~J Tchetgen~Tchetgen.
\newblock Generalized difference-in-differences.
\newblock \emph{Epidemiology}, 34\penalty0 (2):\penalty0 167--174, 2023.

\bibitem[Richardson and Robins(2013)]{richardson2013single}
Thomas~S Richardson and James~M Robins.
\newblock Single world intervention graphs (swigs): A unification of the counterfactual and graphical approaches to causality.
\newblock \emph{Center for the Statistics and the Social Sciences, University of Washington Series. Working Paper}, 128\penalty0 (30):\penalty0 2013, 2013.

\bibitem[Robins et~al.(2008)Robins, Li, Tchetgen~Tchetgen, van~der Vaart, et~al.]{robins2008higher}
James Robins, Lingling Li, Eric~J Tchetgen~Tchetgen, Aad van~der Vaart, et~al.
\newblock Higher order influence functions and minimax estimation of nonlinear functionals.
\newblock In \emph{Probability and Statistics: Essays in Honor of David A. Freedman}, volume~2, pages 335--422. Institute of Mathematical Statistics, 2008.

\bibitem[Roth and Sant'Anna(2023)]{roth2023parallel}
Jonathan Roth and Pedro~HC Sant'Anna.
\newblock When is parallel trends sensitive to functional form?
\newblock \emph{Econometrica}, 91\penalty0 (2):\penalty0 737--747, 2023.

\bibitem[Sant'Anna and Zhao(2020)]{sant2020doubly}
Pedro~HC Sant'Anna and Jun Zhao.
\newblock Doubly robust difference-in-differences estimators.
\newblock \emph{Journal of Econometrics}, 219\penalty0 (1):\penalty0 101--122, 2020.

\bibitem[Schick(1986)]{schick1986asymptotically}
Anton Schick.
\newblock On asymptotically efficient estimation in semiparametric models.
\newblock \emph{The Annals of Statistics}, pages 1139--1151, 1986.

\bibitem[Sofer et~al.(2016)Sofer, Richardson, Colicino, Schwartz, and Tchetgen~Tchetgen]{sofer2016negative}
Tamar Sofer, David~B Richardson, Elena Colicino, Joel Schwartz, and Eric~J Tchetgen~Tchetgen.
\newblock On negative outcome control of unobserved confounding as a generalization of difference-in-differences.
\newblock \emph{Statistical Science}, 31\penalty0 (3):\penalty0 348, 2016.

\bibitem[Streeter et~al.(2017)Streeter, Lin, Crathorne, Haasova, Hyde, Melzer, and Henley]{streeter2017adjusting}
Adam~J Streeter, Nan~Xuan Lin, Louise Crathorne, Marcela Haasova, Christopher Hyde, David Melzer, and William~E Henley.
\newblock Adjusting for unmeasured confounding in nonrandomized longitudinal studies: A methodological review.
\newblock \emph{Journal of Clinical Epidemiology}, 87:\penalty0 23--34, 2017.

\bibitem[Tchetgen et~al.(2024)Tchetgen, Ying, Cui, Shi, and Miao]{tchetgen2020introduction}
Eric J.~Tchetgen Tchetgen, Andrew Ying, Yifan Cui, Xu~Shi, and Wang Miao.
\newblock {An Introduction to Proximal Causal Inference}.
\newblock \emph{Statistical Science}, 39\penalty0 (3):\penalty0 375 -- 390, 2024.

\bibitem[Thome et~al.(2025)Thome, Spieker, Rebeiro, Li, Li, and Shepherd]{thome2025estimating}
Julia~C Thome, Andrew~J Spieker, Peter~F Rebeiro, Chun Li, Tong Li, and Bryan~E Shepherd.
\newblock Estimating treatment effects with a unified semi-parametric difference-in-differences approach.
\newblock \emph{arXiv preprint arXiv:2506.12207}, 2025.

\bibitem[Torous et~al.(2024)Torous, Gunsilius, and Rigollet]{torous2024optimal}
William Torous, Florian Gunsilius, and Philippe Rigollet.
\newblock An optimal transport approach to estimating causal effects via nonlinear difference-in-differences.
\newblock \emph{Journal of Causal Inference}, 12\penalty0 (1):\penalty0 20230004, 2024.

\bibitem[Wooldridge(2023)]{wooldridge2023simple}
Jeffrey~M Wooldridge.
\newblock Simple approaches to nonlinear difference-in-differences with panel data.
\newblock \emph{The Econometrics Journal}, 26\penalty0 (3):\penalty0 C31--C66, 2023.

\bibitem[Ye et~al.(2023)Ye, Ertefaie, Flory, Hennessy, and Small]{ye2023instrumented}
Ting Ye, Ashkan Ertefaie, James Flory, Sean Hennessy, and Dylan~S Small.
\newblock Instrumented difference-in-differences.
\newblock \emph{Biometrics}, 79\penalty0 (2):\penalty0 569--581, 2023.

\bibitem[Ye et~al.(2024)Ye, Keele, Hasegawa, and Small]{ye2024negative}
Ting Ye, Luke Keele, Raiden Hasegawa, and Dylan~S Small.
\newblock A negative correlation strategy for bracketing in difference-in-differences.
\newblock \emph{Journal of the American Statistical Association}, 119\penalty0 (547):\penalty0 2256--2268, 2024.

\bibitem[Yousaf(2021)]{yousaf2021sticking}
Hasin Yousaf.
\newblock Sticking to one's guns: Mass shootings and the political economy of gun control in the {U}nited {S}tates.
\newblock \emph{Journal of the European Economic Association}, 19\penalty0 (5):\penalty0 2765--2802, 2021.

\bibitem[Yousaf(2022)]{yousaf2022replication}
Hasin Yousaf.
\newblock {Replication Data for: Sticking to One's Guns: Mass Shootings and the Political Economy of Gun Control in the United States}, 2022.
\newblock URL \url{https://doi.org/10.7910/DVN/UHWGEQ}.

\bibitem[Zhao and Cui(2025)]{zhao2025semiparametric}
Pan Zhao and Yifan Cui.
\newblock A semiparametric instrumented difference-in-differences approach to policy learning.
\newblock \emph{Biometrika}, page asaf043, 2025.

\end{thebibliography}

\appendix

\renewcommand{\thefigure}{S\arabic{figure}}
\renewcommand{\thetable}{S\arabic{table}}
\renewcommand{\thetheorem}{S\arabic{theorem}}
\renewcommand{\thelemma}{S\arabic{lemma}}
\renewcommand{\theproposition}{S\arabic{proposition}}
\renewcommand{\theequation}{S\arabic{equation}}
\renewcommand{\thesection}{S\arabic{section}}

\setcounter{figure}{0}
\setcounter{table}{0}
\setcounter{theorem}{0}
\setcounter{lemma}{0}
\setcounter{equation}{0}
\setcounter{section}{0}
\setcounter{proposition}{0}

\section{Simulation Details}\label{appendix:sim}

We provide complete specifications of the data-generating process used in the simulation study.

\paragraph{Treatment Assignment.}  
To generate the linear predictor \(\phi\) in Equation~\eqref{eq:simu-1}, we set \(c_1 = -0.7\) and \(c_2 = -1.5\). These values are selected to yield a reasonable proportion of treated units and to ensure a realistic balance between the influence of measured and unmeasured confounders.

\paragraph{Potential Outcomes.}  
For the untreated potential outcomes in Equation~\eqref{eq:simu-2}, we set the scale parameters \(\alpha_0 = 3\) and \(\alpha_1 = 1\). The functional forms for the unmeasured confounding component and the time-specific covariate effects are given by:
\begin{equation*}
\begin{aligned}
  m(U, L) &= -2\left(U_1^2 + U_2 + \frac{1}{2}U_1U_2 - 1\right), \\
  k_0(L) &= -\sin\left(4\pi L_1 L_2\right) + (L_3 - 0.5)^2 + |L_4| + L_3L_5 + L_6^2 - 1, \\
  k_1(L) &= k_0(L) - 1.5 L_1 \cos(\pi L_4).
\end{aligned}
\end{equation*}
These expressions incorporate nonlinear interactions and higher-order terms to reflect realistic complexity in treatment-free outcome dynamics.

\section{Proofs}\label{appendix:proof}

\subsection{Lemma \ref{lem:QQ}}

\begin{proof}
    First, note that by Assumption \ref{assum:confounding} and \ref{assum:pos_U}, Assumption \ref{assum:db} implies
    \[\gamma(Y_0^{a = 0}, L)\mid A = 1, L, U \overset{d}{=} Y_1^{a=0} \mid A = 1, L, U.\]
    and 
    \[\gamma(Y_0^{a = 0}, L)\mid A = 0, L, U \overset{d}{=} Y_1^{a=0} \mid A = 0, L, U.\]
  We then apply the theory of optimal transport. By the Brenier-McCann's Theorem \citep{brenier1991polar, mccann1995existence}, Assumption \ref{assum:cc}, and Assumption \ref{assum:ct}(a), $Q_{Y_{1}\mid A=0, L,U}\circ F_{Y_{0}\mid A=0, L,U}\left( \cdot\right)$ is the unique function $f$ monotone in its first argument such that $f(Y_0^{a = 0}, L, U)\mid A = 0, L,U \dequiv  Y_1^{a=0} \mid A = 0,L,U \text{ almost surely.}$ Then, the existence condition by Assumption \ref{assum:db} implies $Q_{Y_{1}\mid A=0, L,U}\circ F_{Y_{0}\mid A=0, L,U}\left( \cdot\right)$ is $\gamma$ and thus 
  does not depend on $U$, almost surely.
\end{proof}

\subsection{Remark \ref{rmk:pt}}

\begin{proof}
    {(I)} We begin by showing that Assumption~\ref{assum:db} implies Equation~\eqref{eq:nlpt}. The argument closely follows the proof of Theorem~\ref{thm:idett}, with minor modifications.
    \begin{equation*}
    \begin{aligned}
        F_{Y_0^{a=0}\mid A = a, L = l}(y) 
        &= \Prob(Y_0^{a = 0} \le y \mid  A = a, L = l)\\
        & \text{(by Assumption \ref{assum:pos_U})}\\ 
        &= \int \Prob(Y_0^{a = 0} \le y \mid  A = a, L = l, U = u) \dx{F_{U\mid A = a, L = l}(u)}\\
        & \text{(by Assumption \ref{assum:ct}(b))}\\ 
        &= \int \Prob(\tgamma(Y_0^{a = 0}, l, u) \le \tgamma(y,l,u) \mid  A = a, L = l, U = u) \dx{F_{U\mid A = a, L = l}(u)}\\
        & \text{(by $\tgamma(Y_0^{a=0}, L, U)\mid A = a, L = l, U = u \sim Y_1^{a = 0}\mid A = a, L = l, U = u$)}\\
        &= \int \Prob(Y_1^{a = 0} \le \tgamma(y,l, u) \mid  A = a, L = l, U = u) \dx{F_{U\mid A = a, L = l}(u)}\\
        &= F_{Y_1^{a = 0}\mid A = a, L = l}(\gamma(y,l)).
    \end{aligned}
    \end{equation*}
    Therefore, 
    \[\gamma(y,l) = Q_{Y_1^{a = 0}\mid A = a, L = l} \circ F_{Y_0^{a=0}\mid A = a, L = l}(y),\]
    for both $a = 0, 1$.

    {(II)} Next, we show that Assumption 3.1, together with the accompanying assumptions in \citet{athey2006identification}, implies Equation~\eqref{eq:nlpt}.
    Assumption 3.1 requires
    \[
    Y_{t}^{a=0}=h\left( U,t\right) 
    \]%
    where $h$ is strictly increasing in $U$ for each $t=0,1$.
    Then we have that 
    \begin{eqnarray*}
    F_{Y_{t}^{a=0}|A=a}\left( y\right)  &=&\Pr \left\{ Y_{t}^{a=0}\leq
    y|A=a\right\}  \\
    &=&\Pr \left\{ h\left( U,t\right) \leq y|A=a\right\}  \\
    &=&\Pr \left\{ U\leq h^{-1}\left( y,t\right) |A=a\right\}  \\
    &=&F_{U|A=a}\left( h^{-1}\left( y,t\right) \right) 
    \end{eqnarray*}%
    and 
    \[
    Q_{Y_{t}^{a=0}|A=a}\left( v\right) =h\left( Q_{U|A=a}\left(
    v\right) ,t\right) 
    \]%
    therefore 
    \begin{eqnarray*}
    &&Q_{Y_{t=1}^{a=0}|A=a}\circ F_{Y_{t=0}^{a=0}|A=a}\left( y\right)  \\
    &=&h\left( Q_{U|A=a}\left( F_{U|A=a}\left( h^{-1}\left( y,t=0\right)
    \right) \right) ,t=1\right)  \\
    &=&h\left( h^{-1}\left( y,t=0\right) ,t=1\right) 
    \end{eqnarray*}%
    for both $a = 0, 1$.

    {(III)} Finally, we show that Equation~\eqref{eq:nlpt}, combined with assumptions that do not involve the unmeasured confounder \(U\)—namely, Assumptions~\ref{assum:cc} and~\ref{assum:na}—suffices to identify the ATT, as given in Equation~\eqref{eq:idett} of Theorem~\ref{thm:idett}. To establish this result, we additionally impose the following weaker continuity condition, which also avoids referencing \(U\): we assume that, almost surely, the conditional distribution functions \(F_{Y_{0} \mid A=0, L}(\cdot)\) and \(F_{Y_{1} \mid A=0, L}(\cdot)\) are continuous. Under these conditions, we have:
    \begin{equation*}
        \begin{aligned}
            &\E\left\{ Y_1 - Q_{Y_1 \mid A=0, L} \circ F_{Y_0 \mid A=0, L}(Y_0) \mid A=1 \right\}\\
            & \phantom{=}\text{  (by Assumption \ref{assum:cc} and \ref{assum:na})}\\
            &= \E[Y_1^{a=1} \mid A = 1] - \E\left\{ Q_{Y_1^{a=0} \mid A=0, L} \circ F_{Y_0^{a=0} \mid A=0, L}(Y_0^{a=0}) \mid A=1 \right\}\\
            & \phantom{=}\text{  (by Equation \eqref{eq:nlpt})}\\
            &= \E[Y_1^{a=1} \mid A = 1] - \E\left\{ Q_{Y_1^{a=0} \mid A=1, L} \circ F_{Y_0^{a=0} \mid A=1, L}(Y_0^{a=0}) \mid A=1 \right\}\\
            & \phantom{=}\text{  (by Continuity condition stated above and Law of Iterated Expectation)}\\
            &= \E[Y_1^{a=1} \mid A = 1] - \E\left\{ Y_1^{a=0} \mid A=1 \right\}\\
            &= \E[Y_1^{a=1} - Y_1^{a=0} \mid A = 1] = \theta,
        \end{aligned}
    \end{equation*}
    proving the identification. The analogous results for Corollary~\ref{cor:id_other} and for Assumption 3.1 of \citet{athey2006identification} follow by similar arguments and are omitted for brevity.
\end{proof}

\subsection{Theorem \ref{thm:idett}}

\begin{proof}
  Let ``$\sim$'' denote ``distributed as'', as a shorthand for ``$\overset{d}{=}$''. 
  By the Probability Integral Transform Theorem and Quantile Transform Theorem \citep{angus1994probability}, we have
  \begin{equation*}
    \begin{aligned}
        &Y_1^{a = 0} \mid A = 1, L,U \\
        & \text{  (by Assumptions \ref{assum:confounding} and \ref{assum:pos_U})}\\
        &\sim Y_1^{a = 0} \mid A = 0, L,U \\
        & \text{  (by transform theorems and Assumption \ref{assum:ct0})}\\
        & \sim Q_{Y_1^{a = 0} \mid A = 0, L,U} \circ F_{Y_0^{a = 0} \mid A = 0, L,U} \left( Y_0^{a = 0} \right) \mid A = 0, L,U\\
        & \text{(by Assumptions \ref{assum:confounding} and \ref{assum:na})}\\
        & \sim Q_{Y_1^{a = 0} \mid A = 0, L,U}\circ F_{Y_0^{a = 0} \mid A = 0, L,U} \left( Y_0^{a = 1} \right) \mid A = 1, L,U\\
        & \text{(by Assumption \ref{assum:cc})}\\
        & \sim Q_{Y_1 \mid A = 0, L,U} \circ F_{Y_0 \mid A = 0, L,U} \left( Y_0 \right) \mid A = 1, L,U\\
    \end{aligned}
\end{equation*}
Denote $\tgamma(y, l, u) \equiv Q_{Y_1 \mid  A = 0, L = l, U = u} \circ F_{Y_0 \mid  A = 0, L = l, U = u} \left( y \right)$.
By Assumption \ref{assum:db} and Lemma \ref{lem:QQ}, there exists a function $\gamma(y,l)$ such that $\gamma(y,l) = \tgamma(y, l, u)$. Therefore,
\begin{equation*}
  \begin{aligned}
    F_{Y_0\mid A = 0, L = l}(y) &= F_{Y_0^{a = 0}\mid A = 0, L = l}(y)\\
    &= \Prob(Y_0^{a = 0} \le y \mid  A = 0, L = l)\\
    & \text{(by Assumption \ref{assum:pos_U})}\\ 
    &= \int \Prob(Y_0^{a = 0} \le y \mid  A = 0, L = l, U = u) \dx{F_{U\mid A = 0, L = l}(u)}\\
    & \text{(by Assumption \ref{assum:ct}(b))}\\ 
    &= \int \Prob(\tgamma(Y_0^{a = 0}, l, u) \le \tgamma(y,l,u) \mid  A = 0, L = l, U = u) \dx{F_{U\mid A = 0, L = l}(u)}\\
    & \text{(by $\tgamma(Y_0^{a=0}, L, U)\mid A = 0, L = l, U = u \sim Y_1^{a = 0}\mid A = 0, L = l, U = u$)}\\
    &= \int \Prob(Y_1^{a = 0} \le \tgamma(y,l, u) \mid  A = 0, L = l, U = u) \dx{F_{U\mid A = 0, L = l}(u)}\\
    &= \int \Prob(Y_1 \le \gamma(y,l) \mid  A = 0, L = l, U = u) \dx{F_{U\mid A = 0, L = l}(u)}\\
    &= F_{Y_1\mid A = 0, L = l}(\gamma(y,l)).
  \end{aligned}
\end{equation*}
Therefore, when $\gamma(y,l)$ is in the support of the conditional law of $Y_1$ given $A = 0, L = l$, we have
\[\gamma(y,l) = Q_{Y_1\mid A = 0, L = l} \circ  F_{Y_0\mid A = 0, L = l}(y),\]
and since $\gamma(Y_0, L) \in \supp(Y_1\mid A = 0, L)$ almost surely, we have
\[ Y_1^{a = 0} \mid  A = 1, L, U \sim Q_{Y_1\mid A = 0, L} \circ  F_{Y_0\mid A = 0, L}(Y_0)\mid  A = 1, L, U.\]
Then,
\begin{equation*}
    \begin{aligned}
        \theta &= \E[Y_1^{a=1} - Y_1^{a = 0}\mid A = 1]\\
        &= \E\left[\E\left\{Y_1 - Y_1^{a = 0}\mid A = 1, L, U\right\}\right]\\
        &= \E\left[\E\left\{Y_1 - Q_{Y_1\mid A = 0, L} \circ  F_{Y_0\mid A = 0, L}(Y_0)\mid A = 1, L, U\right\}\right]\\
        &= \E\left\{Y_1 - Q_{Y_1\mid A = 0, L} \circ  F_{Y_0\mid A = 0, L}(Y_0)\mid A = 1\right\}
    \end{aligned}
\end{equation*}
\end{proof}

\subsection{Corollary \ref{cor:id_other}}

\begin{proof}
  (I) Corollary \ref{cor:idcd}:\\
  By the proof of Theorem \ref{thm:idett}, 
  \[ Y_1^{a = 0} \mid  A = 1, L, U \sim Q_{Y_1\mid A = 0, L} \circ  F_{Y_0\mid A = 0, L}(Y_0)\mid  A = 1, L, U.\] Using this result, we have 
  \begin{equation*}
    \begin{aligned}
      F_{Y^{a=0}_1\mid A = 1}(y) &= \Prob(Y^{a=0}_1 \le y \mid A = 1)\\
      &= \E\{ \Prob(Y^{a=0}_1 \le y \mid A = 1, L, U) \mid A = 1\}\\
      &= \E[ \Prob\{Q_{Y_1\mid A = 0, L} \circ  F_{Y_0\mid A = 0, L}(Y_0) \le y \mid A = 1, L, U\}  \mid A = 1]\\
      &= \Prob\{Q_{Y_1\mid A = 0, L} \circ  F_{Y_0\mid A = 0, L}(Y_0) \le y \mid A = 1\}.
    \end{aligned}
  \end{equation*} 

  (II) Corollary \ref{cor:idq}:\\
  It is a direct result of the definition of quantile treatment effect on the treated and Corollary \ref{cor:idcd}.
\end{proof}

\subsection{Proposition \ref{prop:semitrans1}}
\begin{proposition}
  Example \ref{eg:stm} satisfies Assumptions \ref{assum:setting}, \ref{assum:ct}, and \ref{assum:db}.
  \label{prop:semitrans1}
\end{proposition}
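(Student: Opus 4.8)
The plan is to verify the three assumption groups in turn, treating Assumptions~\ref{assum:cc}, \ref{assum:pos_U}, and~\ref{assum:na} as granted by the statement of Example~\ref{eg:stm}, so that the real work is to check latent unconfoundedness (\ref{assum:confounding}), the two continuity conditions (\ref{assum:ct0}--\ref{assum:ct1}), and the distributional bridge (\ref{assum:db}). The unifying device is the observation that, conditional on $(L,U)=(l,u)$, the treatment $A=h(L,U,\delta)$ is a deterministic function of $\delta$ alone, whereas each untreated potential outcome $Y_t^{a=0}=\beta_t(k_t(L)+m(U)+\epsilon_t)$ is a deterministic function of $\epsilon_t$ alone. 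First I would dispatch Assumption~\ref{assum:confounding}: fixing $(l,u)$, the maps $\delta\mapsto h(l,u,\delta)$ and $\epsilon_t\mapsto\beta_t(k_t(l)+m(u)+\epsilon_t)$ are measurable, so the hypothesis $\delta\indep\epsilon_t\mid L,U$ transports to independence of their images, giving $A\indep Y_t^{a=0}\mid L,U$ for each $t$.

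A key auxiliary step, which I would carry out next, is to show that conditioning on the event $\{A=0\}$ is innocuous for the law of $\epsilon_t$ given $(L,U)$. Since $A$ is a function of $\delta$ given $(L,U)$ and $\delta\indep\epsilon_t\mid L,U$, we obtain $A\indep\epsilon_t\mid L,U$; combined with positivity (\ref{assum:pos_U}), which guarantees $\Prob(A=0\mid L,U)>0$ so the conditional law is well defined, this yields $\epsilon_t\mid A=0,L,U\overset{d}{=}\epsilon_t\mid L,U$. Feeding this into the error-structure equality $\epsilon_0\mid A=0,L,U\overset{d}{=}\epsilon_1\mid A=0,L,U$ produces the clean statement $\epsilon_0\mid L,U\overset{d}{=}\epsilon_1\mid L,U$, which strips away the $\{A=0\}$ conditioning that otherwise mismatches the form required by Assumption~\ref{assum:db}.

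With this in hand the remaining verifications are short. For continuity, causal consistency gives $Y_t=Y_t^{a=0}$ on $\{A=0\}$, and since $\beta_t$ is strictly increasing the event $\{\beta_t(k_t(l)+m(u)+\epsilon_t)=v\}$ forces $\epsilon_t$ to take a single value, which has probability zero because $F_{\epsilon\mid A=0,L,U}$ is continuous; hence $F_{Y_t\mid A=0,L,U}$ has no atoms and is continuous, establishing~\ref{assum:ct0}--\ref{assum:ct1}. For the distributional bridge I would exhibit
\[
\gamma(y,l)=\beta_1\!\left(\beta_0^{-1}(y)+k_1(l)-k_0(l)\right),
\]
which is strictly increasing in $y$ because $\beta_0^{-1}$ and $\beta_1$ are. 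Substituting $\beta_0^{-1}(Y_0^{a=0})=k_0(L)+m(U)+\epsilon_0$ gives $\gamma(Y_0^{a=0},L)=\beta_1(k_1(L)+m(U)+\epsilon_0)$, and the auxiliary equality $\epsilon_0\mid L,U\overset{d}{=}\epsilon_1\mid L,U$ then identifies its conditional law with that of $Y_1^{a=0}=\beta_1(k_1(L)+m(U)+\epsilon_1)$, which is exactly Assumption~\ref{assum:db} (with the explicit $\gamma$ matching the one quoted in Example~\ref{eg:stm}).

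The main obstacle is precisely the reconciliation of conditioning sets in the auxiliary step: the error model is phrased conditional on $\{A=0\}$, while Assumption~\ref{assum:db} is phrased conditional on $(L,U)$ only, and everything hinges on using $\delta\indep\epsilon_t\mid L,U$ together with $A=h(L,U,\delta)$ to show these agree. Once that bridge is in place the verifications of unconfoundedness, continuity, and the existence of $\gamma$ are routine, and one may optionally cross-check against Lemma~\ref{lem:QQ} that the constructed $\gamma$ coincides with the $u$-invariant quantile-quantile transform.
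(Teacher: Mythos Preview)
Your proposal is correct. The treatment of Assumptions~\ref{assum:confounding} and~\ref{assum:ct} is identical in spirit to the paper's; the only structural difference is in how you verify Assumption~\ref{assum:db}. The paper computes $F_{Y_t^{a=0}\mid A=0,L,U}(y)=F_{\epsilon\mid A=0,L,U}\{\beta_t^{-1}(y)-m(U)-k_t(L)\}$ and then shows the quantile--quantile map $Q_{Y_1\mid A=0,L,U}\circ F_{Y_0\mid A=0,L,U}$ is free of $u$, appealing (implicitly) to Lemma~\ref{lem:QQ} to conclude Assumption~\ref{assum:db}. You instead verify the defining distributional equality of Assumption~\ref{assum:db} directly, which is why you need the auxiliary step removing the $\{A=0\}$ conditioning from the error-law assumption. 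The paper's route is marginally shorter because the error hypothesis is already stated conditional on $\{A=0\}$, matching what Lemma~\ref{lem:QQ} requires; your route is slightly more self-contained since it does not invoke Lemma~\ref{lem:QQ}. Both produce the same $\gamma(y,l)=\beta_1(\beta_0^{-1}(y)+k_1(l)-k_0(l))$, and you already note the cross-check against Lemma~\ref{lem:QQ} yourself.
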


\begin{proof}
  (I) Assumption \ref{assum:confounding}: This is a direct result of $\delta \indep \epsilon_t\mid L,U$. 

  (II) Assumption \ref{assum:ct}: This is implied by the continuity of $F_{\epsilon\mid A = 0,L, U}(\cdot)$.

  (III) Assumption \ref{assum:db}: 
  Since $\beta_t$ is strictly increasing, it has inverse function $\alpha_t \equiv \beta_t^{-1}$.
  By the data-generating process in Example \ref{eg:stm}, we have the following conditional distribution function:
  \begin{equation*}
      \begin{aligned}
          F_{Y_{t}^{a=0}|A=0,L,U}\left( y\right)  &= \Prob \left( Y_{t}^{a=0}\le y|A=0,L,U\right)  \\
          &= \Prob \left\{ \alpha_{t}(Y_{t}^{a=0}) \le \alpha_{t}(y)|A=0,L,U\right\}  \\
          &= \Prob \left\{ m\left( U, L\right) +\epsilon_{t}+k_{t}(L) \le \alpha_{t}\left( y\right) |A=0,L,U\right\}  \\
          &= \Prob \left\{ \epsilon_{t} \le \alpha_{t}\left( y\right) - m\left( U, L\right) - k_{t}(L)  |A=0,L,U\right\}  \\
          &= F_{\epsilon |A=0,L,U}\left\{\alpha_{t}\left( y\right) - m\left( U, L\right) - k_{t}(L) \right\}\\
          &= F_{\epsilon |A=0,L,U}\left\{\beta_{t}^{-1}\left( y\right) - m\left( U, L\right) - k_{t}(L) \right\}
      \end{aligned}
  \end{equation*}
  This implies \[Q_{Y_{1}|A=0,L, U}\circ F_{Y_{0}|A=0, L,U}\left( y\right)  = \beta_{1}\left( \beta_{0}^{-1}\left(y\right) +k_{1}(L)-k_{0}(L)\right),\]
  which does not depend on $U$.
\end{proof}%

\subsection{Lemma \ref{lem:flat}}
For a nondecreasing real-valued function $g$ defined on $\R$, define \(I_t\) to be the largest open interval such that \(g(x) = t\) for all \(x \in I_t\). We refer to any nonempty such interval \(I_t\) as a \emph{flat spot} of \(g\), representing regions where the function \(g\) is constant.

\begin{lemma} \label{lem:flat}
Let \((\Omega, \mathcal{F}, \mathbb{P})\) be a complete probability space, and let \(X\) be a real-valued random variable. Suppose \(g: \mathbb{R} \rightarrow \mathbb{R}\) is nondecreasing, and \(X\) lies within the closure of flat spots of \(g\) with probability zero. Then, for any integrable random variable \(A\),
\[
\mathbb{E}[A \mid X] = \mathbb{E}[A \mid g(X)] \quad \text{almost surely.}
\]
\end{lemma}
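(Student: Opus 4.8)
The plan is to reduce the claim to a measurability statement and then exploit the essential injectivity of $g$ away from its flat spots. Since $g$ is nondecreasing it is Borel measurable, so $g(X)$ is $\sigma(X)$-measurable and hence $\sigma(g(X)) \subseteq \sigma(X)$. By the tower property,
\[
\mathbb{E}[A \mid g(X)] = \mathbb{E}\big[\,\mathbb{E}[A\mid X]\,\big|\, g(X)\big],
\]
so it suffices to show that $\mathbb{E}[A\mid X]$ admits a version that is $\sigma(g(X))$-measurable; once this is established, conditioning it on $g(X)$ returns it unchanged and the two conditional expectations coincide almost surely. Writing $\mathbb{E}[A\mid X] = \phi(X)$ for a Borel $\phi$ (Doob--Dynkin), the entire problem reduces to showing that $X$ equals a measurable function of $g(X)$ with probability one.

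First I would record the combinatorial structure of $g$. Distinct flat spots are disjoint nondegenerate open intervals, hence there are at most countably many; their union, and therefore its closure $N$, is Borel, so $\{X \in N\}$ is a genuine event and the hypothesis $\mathbb{P}(X \in N) = 0$ is meaningful. The key geometric fact I would then prove is that $g$ is injective on $\mathbb{R}\setminus N$: if $x_1 < x_2$ lie outside $N$ with $g(x_1) = g(x_2)$, monotonicity forces $g$ to be constant on $[x_1,x_2]$, placing the open interval $(x_1,x_2)$ inside a maximal flat spot and hence $x_1, x_2$ in its closure, contradicting $x_i \notin N$.

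To package injectivity into a measurable inverse, I would use the left-continuous generalized inverse $g^{\leftarrow}(y) \equiv \inf\{x : g(x) \ge y\}$, which is itself nondecreasing and therefore Borel. The crucial verification is that $g^{\leftarrow}(g(x)) = x$ for every $x \in \mathbb{R}\setminus N$: the inequality $g^{\leftarrow}(g(x)) \le x$ is immediate since $x$ belongs to the defining set, while a strict inequality would produce some $s < x$ with $g(s) = g(x)$, again forcing $g$ constant on $[s,x]$ and thus $x \in N$ --- exactly the contradiction ruled out by the hypothesis. Consequently $X = g^{\leftarrow}(g(X))$ on the probability-one event $\{X \notin N\}$.

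Combining these ingredients, $\phi(X) = \phi\big(g^{\leftarrow}(g(X))\big)$ almost surely, and the right-hand side is manifestly $\sigma(g(X))$-measurable; by completeness of $(\Omega,\mathcal{F},\mathbb{P})$ this identifies $\mathbb{E}[A\mid X]$ with a $\sigma(g(X))$-measurable random variable up to a null set, which closes the tower-property argument. I expect the main obstacle to be the careful treatment of the generalized inverse at jumps and at flat-spot endpoints: the identity $g^{\leftarrow}(g(x)) = x$ is precisely where the hypothesis must refer to the \emph{closure} of the flat spots rather than their union, and pinning down this boundary behavior exactly (rather than merely up to the interiors of the flat spots) is the delicate part of the argument.
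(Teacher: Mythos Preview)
Your proof is correct and proceeds by a somewhat different route than the paper's. Both arguments ultimately establish that $X$ is recoverable from $g(X)$ on the probability-one set $\{X \notin N\}$, but the mechanics differ. The paper works directly at the level of augmented $\sigma$-algebras: it shows $\sigma(X)\vee\sigma(\mathcal{N}) = \sigma(g(X))\vee\sigma(\mathcal{N})$ by verifying, for each generating set $\{X \le t\}$, that $\{X \le t\}\cap\mathfrak{D} = \{g(X)\le g(t)\}\cap\mathfrak{D}$ on the good set $\mathfrak{D}=\{X\notin N\}$, and then reads off equality of the conditional expectations from equality of the completed $\sigma$-algebras. Your approach is more constructive: you build an explicit Borel left-inverse $g^{\leftarrow}$, verify $g^{\leftarrow}(g(x))=x$ for $x\notin N$, deduce $X=g^{\leftarrow}(g(X))$ almost surely, and close via the tower property. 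The explicit inverse makes the role of the \emph{closure} hypothesis especially transparent (it is precisely what forces the injectivity argument to go through at flat-spot endpoints), and it avoids a $\pi$-system check; the paper's $\sigma$-algebra argument, on the other hand, sidesteps the construction and boundary analysis of $g^{\leftarrow}$. Both are clean and of comparable length.
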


\begin{proof}
Let \(\mathcal{N}\) denote the collection of all \(\mathbb{P}\)-null sets in \(\mathcal{F}\), and let \(\sigma(\mathcal{N})\) denote the \(\sigma\)-algebra they generate. Define \(Z \equiv g(X)\). Because $\sigma(A)\indep \sigma(\N)|\sigma(X), \sigma(A)\indep \sigma(\N)|\sigma(Z)$, we have
\[
\mathbb{E}[A \mid \sigma(X) \vee \sigma(\mathcal{N})] = \mathbb{E}[A \mid X], \quad \mathbb{E}[A \mid \sigma(Z) \vee \sigma(\mathcal{N})] = \mathbb{E}[A \mid Z].
\]

Since \(Z = g(X)\), it follows that \(\sigma(Z) \subseteq \sigma(X)\), and thus \(\sigma(Z) \vee \sigma(\mathcal{N}) \subseteq \sigma(X) \vee \sigma(\mathcal{N})\). We now show the reverse inclusion under the given assumption.

Define the set
\[
\mathfrak{D} \equiv \left\{ \omega \in \Omega : X(\omega) \text{ lies outside the closure of flat spots of } g \right\}.
\]
By assumption, \(\mathbb{P}(\mathfrak{D}) = 1\).

Now consider a generating class for \(\sigma(X)\), namely sets of the form \(B_t = \{x \in \mathbb{R} : x \le t\}\). For any \(t \in \mathbb{R}\),
\[
X^{-1}(B_t) = \left\{ \omega \in \Omega : X(\omega) \le t \right\}.
\]
On the set \(\mathfrak{D}\), where \(g\) is strictly increasing in a neighborhood of \(X(\omega)\), we have 
\[
\left\{ \omega \in \mathfrak{D} : X(\omega) \le t \right\} = \left\{ \omega \in \mathfrak{D} : Z(\omega) \le g(t) \right\}.
\]
Therefore,
\[
X^{-1}(B_t) = \left( \mathfrak{D} \cap \{Z \le g(t)\} \right) \cup \left( \mathfrak{D}^c \cap \{X \le t\} \right).
\]
The first term belongs to \(\sigma(Z) \vee \sigma(\mathcal{N})\), and the second to \(\sigma(\mathcal{N})\) since \(\mathbb{P}(\mathfrak{D}^c) = 0\). Thus, \(X^{-1}(B_t) \in \sigma(Z) \vee \sigma(\mathcal{N})\) for all \(t\), which implies \(\sigma(X) \subseteq \sigma(Z) \vee \sigma(\mathcal{N})\).

Hence,
\[
\sigma(X) \vee \sigma(\mathcal{N}) = \sigma(Z) \vee \sigma(\mathcal{N}),
\]
and the conditional expectations agree almost surely:
\[
\mathbb{E}[A \mid X] = \mathbb{E}[A \mid Z].
\]
\end{proof}

\subsection{Theorem \ref{thm:eif}}

\begin{proof}
  We derive the efficient influence function of the functional of observed data distribution defined by the identification formula in Equation \eqref{eq:idett}.\\
  (I) We first characterize the tangent space and its closure.
  
  Suppose the observable data $W \equiv (A, L, Y_0, Y_1) \sim P$, and define a regular parametric submodel $\mathcal{P}_\lam \equiv \{P_\lam: \lam \in (-\tilde{\lam},\tilde{\lam})\}$, with $\tilde{\lam}$ a small positive number, such that the true DGP $P = P_0$ and
  \[\log \dd{P_\lam}{P}(W) = \lam \score(W) - a(\lam),\]
  where $\score(W) \in \R$ is an arbitrary mean-zero and bounded (score) function under $P_0$, and $ a(\lam)$ is a normalizing constant with $a(\lam) = \log\E_{P_0}\left[ \exp\left\{ \lam \score(W) \right\} \right]$, $\dot{a}(\lam) = \E_{P_\lam}\left\{\score(W)\right\}$, and $\ddot{a}(\lam) = \Var_{P_\lam}\left\{\score(W)\right\}$. Evaluating at $\lam = 0$, $a(0) = \dot{a}(0) = 0, \ddot{a}(0) = \E_{P_0}\{\score(W)^2\}$. The true DGP $P_0$ satisfies Assumptions \ref{assum:setting} -- \ref{assum:db}; in fact, the whole path of $P_\lam$ for each $\lam$ and each $\score(W)$ defined above satisfies these assumptions by the boundedness of $\score$ and mutual absolute continuity between $P_\lam$ to $P_0$. This can be seen from the following:
  
  (i) Assumption \ref{assum:setting}: First, note that Assumption \ref{assum:cc}, \ref{assum:confounding}, and \ref{assum:na} do not restrict observable data. What remains is Assumption \ref{assum:pos_U}, as positivity conditional on $L,U$ implies positivity conditional on $L$ alone. 
  
  First, by the equivalent relation between $P_0$ and $P_\lam$, and boundedness of $\score$ and $\lam$, we have 
  \[m < \dd{P_\lam}{P_0} < M.\]
  Also, by Assumption \ref{assum:pos_U}, we have 
  \[P(A = 1\mid L) \in (\delta,1 - \delta). \]
  The above two bounds imply that there exists positive number $\eta$ such that for all $\score, \lam$, 
  \[P_\lam(A = 1\mid L) \in (\eta,1 - \eta).\]

  (ii) Assumption \ref{assum:ct}: this is a direct result of continuity under $P_0$ and equivalence of (conditional) measures. Note that $F_{Y_{0}\mid A=0, L,U}\left( \cdot\right)$ being a continuous function implies $F_{Y_{0}\mid A=0, L}\left( \cdot\right)$ is a continuous function. By the boundedness conditions above, we know that the conditional density ratios are also bounded. Therefore, for a measurable set $B$, whenever conditional measure of $\{Y_0 \in B\}$ is zero given $A = 0, L$, $P_\lam(Y_0 \in B \mid A = 0, L)$ is also zero. 
  
  (iii) Assumption \ref{assum:db}: This condition does not restrict observable data.
  
  Therefore, since bounded measurable functions are dense in $L_2(P_0)$, the mean-square closure of the linear span of all $\score$ defined above is the entire Hilbert space of $P_0$-mean-zero measurable functions of $W$ with finite second moments equipped with the inner product $\langle h_1, h_2 \rangle = \E_{P_0}\left\{ h_1(W)h_2(W) \right\}$. In other words, our statistical model is nonparametric, and thus there is at most one influence function, and if there exists one, it is the efficient influence function.\\
  
  (II) We derive below the efficient influence function (EIF) for $\theta = \theta_1 - \theta_2$ as a pathwise derivative, where
  \begin{equation*}
    \begin{aligned}
      \theta_1 &= \E\left\{ Y_{1}\mid A=1\right\},\\
      \theta_2 &= \E\left\{ Q_{Y_{1}\mid A=0, L}\circ F_{Y_{0}\mid A=0, L}\left( Y_{0}\right)\mid A=1\right\}. 
    \end{aligned}
  \end{equation*}
  The EIF for $\theta$, denoted by $\IF{\theta}$, is the difference of the EIFs for $\theta_1$ and $\theta_2$, i.e. $\IF{\theta_1}$ and $\IF{\theta_2}$.
  It is well known that 
  \[\IF{\theta_1} = \ff{\I{A = 1}}{P(A = 1)}(Y_1 - \theta_1) = \ff{A}{P(A = 1)}(Y_1 - \theta_1),\]
  where $\I{\cdot}$ is the indicator function. Next, we focus on $\IF{\theta_2}$.

  For ease of notation, let 
  $h(y,l) \equiv Q_{Y_{1}\mid A=0,L = l}(y)$, $ g(y,l) \equiv F_{Y_{0}\mid A=0,L = l}(y) $, and $\gamma(y,l) = h(g(y,l),l)$. Then, under $P_\lam$, we have $\gamma_\lam(y,l) = h_\lam(g_\lam(y,l),l) \equiv Q_{\lam,Y_{1}\mid A=0, L = l} \circ  F_{\lam, Y_{0}\mid A=0, L = l}(y)$, and
  \begin{equation*}
    \begin{aligned}
      \theta_{2, \lam} &= \E_{P_\lam}\left\{ \gamma_\lam(Y_0, L) \mid  A = 1\right\} \\
      &= \int \gamma_\lam(y_0, l) f_\lam(y_0, l\mid A = 1) \dy{y_0}\dy{l} \\
      &= \int \gamma_\lam(y_0, l) \ff{f_\lam(y_0, l, A = 1)}{P_\lam(A = 1)} \dy{y_0}\dy{l}.
    \end{aligned}
  \end{equation*}
  Let $\score(\cdot)$ denote the score function. Then, the pathwise derivative is
  \begin{equation*}
    \begin{aligned}
      \plam\theta_{2, \lam}\midl &= 
      \int \pp{}{\lam}\gamma_\lam(y_0, l)\bmidl \ff{f(y_0, l, A = 1)}{P(A = 1)} \dy{y_0}\dy{l} 
      + \int \gamma(y_0,l) \pp{}{\lam}\left[\ff{f_\lam(y_0,l, A = 1)}{P_\lam(A = 1)} \right]\bmidl\dy{y_0}\dy{l}\\
      &= \E\left[\pp{}{\lam}\gamma_\lam(Y_0,L)\bmidl \bigg| A = 1\right] + \E[\score(Y_0, L|A = 1)\gamma(Y_0,L)|A = 1]
    \end{aligned}
  \end{equation*}
  Next, we will focus on $\E\left[\pp{}{\lam}\gamma_\lam(Y_0, L)\bmidl \bigg| A = 1\right]$. By Assumption \ref{assum:confounding} and \ref{assum:pos_U} and measure changing, we have
  \begin{equation*}
    \begin{aligned}
      \E\left[\pp{}{\lam}\gamma_\lam(Y_0, L)\bmidl \bigg| A = 1\right] &= \ff{P(A = 0)}{P(A = 1)}\E\left[\plam\gamma_\lam(Y_0, L)\cdot \ff{P(A = 1|Y_0, L)}{P(A = 0|Y_0, L)} \bigg| A = 0\right]\\
      &= \ff{P(A = 0)}{P(A = 1)}\E\left[\E\left\{\plam\gamma_\lam(Y_0, L)\cdot \ff{P(A = 1|Y_0, L)}{P(A = 0|Y_0, L)}\bigg| A = 0, L\right\} \bigg| A = 0\right]
    \end{aligned}
  \end{equation*}
  Since under $P_\lam$, we have
  \[\gamma_\lam(Y_0, L)|A = 0,L \sim Y_1|A = 0, L.\]
  Therefore, for an arbitrary measurable function $\rho(x,l)$, we have the identity
  \[0 = \E_\lam[\rho(\gamma_\lam(Y_0, L), L) - \rho(Y_1, L)|A = 0, L].\]
  Taking partial derivative with respect to $\lam$ on both sides, we have
  \begin{equation}
  0 = \E[\score(Y_1, Y_0|A = 0, L)\{\rho(\gamma(Y_0,L), L) - \rho(Y_1, L)\}|A = 0, L] + \E[\partial_x\rho(\gamma(Y_0, L), L)\cdot \plam \gamma_\lam(Y_0, L)|A = 0, L].
  \label{eq:eqdiff}
  \end{equation}
  Then, given $L = l$, for an arbitrary random variable $X$, we have \[\E[X|\gamma(Y_0, l), L = l] = \E[X|F_{Y_0|A = 0, L = l}(Y_0), L = l] = \E[X|Y_0, L = l],\]
  where the first equality is by the strict monotonicity of $Q_{Y_1\mid A = 0, L = l}$, and the second equality is by Lemma \ref{lem:flat}.

  Then, by choosing an $\rho$ whose partial derivative with respect to its first argument $x$ is
  \[\nu(x,l) \equiv \partial_x\rho(x,l) \equiv  \ff{P(A = 1|\gamma(Y_0, l) = x, L = l)}{P(A = 0|\gamma(Y_0, l) = x, L = l)}, \] and applying Equation \eqref{eq:eqdiff}, we have
  \begin{equation*}
  \begin{aligned}
      \E\left[\pp{}{\lam}\gamma_\lam(Y_0)\bmidl \bigg| A = 1\right]
      &= \ff{P(A = 0)}{P(A = 1)}\E\left[\E\left\{\plam\gamma_\lam(Y_0, L)\cdot \ff{P(A = 1|Y_0, L)}{P(A = 0|Y_0, L)}\bigg| A = 0, L\right\} \bigg| A = 0\right]\\
      &= \ff{P(A = 0)}{P(A = 1)}\E\left[-\E[\score(Y_1, Y_0|A = 0, L)\{\rho(\gamma(Y_0,L), L) - \rho(Y_1, L)\}|A = 0, L]\bigg| A = 0\right]\\ 
      &= -\ff{P(A = 0)}{P(A = 1)}\E\left[\E\left\{\score(Y_1, Y_0|A = 0, L)\left(\int_{Y_1}^{\gamma(Y_0, L)} \nu(x,L) \dy{x}\right)\bigg|A = 0, L\right\}\bigg| A = 0\right]\\
      &= -\ff{P(A = 0)}{P(A = 1)}\E\left[\score(Y_1, Y_0|A = 0, L)\left(\int_{Y_1}^{\gamma(Y_0, L)} \nu(x,L) \dy{x}\right)\bigg| A = 0\right]\\
      &= \E\left[\score(Y_1, Y_0|A = 0, L)\left(\ff{A - 1}{P(A = 1)}\int_{Y_1}^{\gamma(Y_0, L)} \nu(x,L) \dy{x}\right)\right].
  \end{aligned}
  \end{equation*}
  Therefore,
  \begin{equation*}
    \begin{aligned}
      \plam\theta_{2, \lam}\midl 
      &= \E\left[\score(Y_1, Y_0|A = 0, L)\left(\ff{A - 1}{P(A = 1)}\int_{Y_1}^{\gamma(Y_0, L)} \nu(x,L) \dy{x}\right)\right] + \E\left[\score(Y_0, L|A = 1)\gamma(Y_0,L)\ff{A}{P(A = 1)}\right]
    \end{aligned}
  \end{equation*}
  Then, we have 
  \begin{equation*}
  \begin{aligned}
      & \E\left[\score(Y_1, Y_0|A = 0, L)\left(\ff{A - 1}{P(A = 1)}\int_{Y_1}^{\gamma(Y_0, L)} \nu(x,L) \dy{x}\right)\right] \\
      &= -\ff{P(A = 0)}{P(A = 1)}\E\left[\score(Y_1, Y_0|A, L)\left(\int_{Y_1}^{\gamma(Y_0, L)} \nu(x,L) \dy{x}\right)\bigg| A = 0\right]\\
      &= -\ff{P(A = 0)}{P(A = 1)}\E\left[(\score(Y_1, Y_0|A, L) + \score(A, L))\left(\int_{Y_1}^{\gamma(Y_0, L)} \nu(x,L) \dy{x}\right)\bigg| A = 0\right]\\
      &= -\ff{P(A = 0)}{P(A = 1)}\E\left[\score(Y_1, Y_0, A, L)\left(\int_{Y_1}^{\gamma(Y_0, L)} \nu(x,L) \dy{x}\right)\bigg| A = 0\right]\\
      &= \E\left[\score(Y_1, Y_0, A, L)\left(\int_{Y_1}^{\gamma(Y_0, L)} \nu(x,L) \dy{x}\right)\ff{A - 1}{P(A = 1)}\right]
  \end{aligned}
  \end{equation*}

  \begin{equation*}
    \begin{aligned}
      & \E[\score(Y_0, L|A = 1)\gamma(Y_0, L)|A = 1] \\
      &= \E\left[\score(Y_0, L|A = 1)\gamma(Y_0, L)\ff{A}{P(A = 1)}\right] \\
      &= \E\left[\score(Y_0, L|A)\gamma(Y_0, L)\ff{A}{P(A = 1)}\right] \\
      &= \E\left[\score(Y_0, L|A)\{\gamma(Y_0, L) - \E[\gamma(Y_0, L)|A]\}\ff{A}{P(A = 1)}\right] \\
      &= \E\left[\left\{\score(Y_0, L|A) + \score(Y_1|Y_0,A, L)\right\}\{\gamma(Y_0, L) - \E[\gamma(Y_0, L)|A]\}\ff{A}{P(A = 1)}\right] \\
      &= \E\left[\left\{\score(Y_1, Y_0, L|A) + \score(A)\right\}\{\gamma(Y_0, L) - \E[\gamma(Y_0, L)|A]\}\ff{A}{P(A = 1)}\right] \\
      &= \E\left[\score(Y_1, Y_0, L, A) \{\gamma(Y_0, L) - \E[\gamma(Y_0, L)|A]\}\ff{A}{P(A = 1)}\right] \\
      &= \E\left[\score(Y_1, Y_0, L, A) \{\gamma(Y_0, L) - \E[\gamma(Y_0, L)|A = 1]\}\ff{A}{P(A = 1)}\right]
    \end{aligned}
  \end{equation*}
  Then,
  \begin{equation*}
    \begin{aligned}
      \plam\theta_{2, \lam}\midl
      &= \E\left[\score(Y_1, Y_0, L, A) \left(\{\gamma(Y_0, L) - \E[\gamma(Y_0, L)|A = 1]\}\ff{A}{P(A = 1)} + \int_{Y_1}^{\gamma(Y_0, L)} \nu(x,L) \dy{x}\ff{A - 1}{P(A = 1)}\right)\right]\\
      &= \E\left[\ff{\score(Y_1, Y_0, L, A) }{P(A = 1)}\left\{ A(\gamma(Y_0, L) - \theta_2) - (1-A)\int_{Y_1}^{\gamma(Y_0, L)} \nu(x, L)\dx{x}\right\}\right].
    \end{aligned}
  \end{equation*}
  Since our model is nonparametric, the EIF for $\theta_2$ is
  \[\IF{\theta_2} = \ff{1}{P(A = 1)}\left\{ A(\gamma(Y_0, L) - \theta_2) - (1-A)\int_{Y_1}^{\gamma(Y_0, L)} \nu(x, L)\dx{x}\right\},\]
  and the EIF for $\theta$ is
  \begin{equation*}
    \begin{aligned}
      \IF{\theta} = \ff{A}{P(A = 1)}[\{Y_1 - \gamma(Y_0, L)\} - \theta] + \ff{1 - A}{P(A = 1)}\int_{Y_1}^{\gamma(Y_0, L)} \nu(x, L)\dx{x}.
    \end{aligned}
  \end{equation*}

(III) Therefore, the semiparametric efficiency bound is $\Var(\IF{\theta}) = \E(\IF{\theta}^2)$.
  
\end{proof}

\subsection{Theorem \ref{thm:eif_gen}}

\begin{proof}
  (I) Tangent Space: 
  
  The mean-square closure of the tangent space remains the nonparametric model, as in the proof of Theorem \ref{thm:eif}, thus omitted. In particular, we still adopt the same set of parametric submodels: 
  \[\log \dd{P_\lam}{P}(W) = \lam \score(W) - a(\lam).\]

  (II) Efficient Influence Function (EIF):

  Since $\E_{P_0}[g(W;\vartheta, \gamma) \mid A = 1] = 0$, in a submodel $\mathcal{P}_\lam \equiv \{P_\lam: \lam \in (-\tilde{\lam},\tilde{\lam})\}$, we have the identity $0 =\E_{\lam}[g(W;\vartheta_\lam, \gamma_\lam)\mid A = 1]$. In this proof, all the partial derivative with respect to $\lam$ is evaluated at $\lam = 0$ and we thus omit this for notational convenience. Taking derivative on both sides, we have
  \begin{equation*}
    \begin{aligned}
      0 &= \pd{\lam}\E_{\lam}[g(W;\vartheta_\lam, \gamma_\lam)\mid A = 1]\\
      &= \E[g(W;\vartheta, \gamma)\score(W\mid A = 1)\mid A = 1] + \pd{\vartheta}\E[g(W;\vartheta, \gamma)\mid A = 1]\cdot \pd{\lam}\vartheta_\lam \\
      & \phantom{=} + \pd{\lam}\E[g(W;\vartheta, \gamma_\lam)\mid A = 1]
    \end{aligned}
  \end{equation*}
  Therefore, 
  \[\pd{\lam}\vartheta_\lam = \ff{\E[g(W;\vartheta, \gamma)\score(W\mid A = 1)\mid A = 1] + \pd{\lam}\E[g(W;\vartheta, \gamma_\lam)\mid A = 1] }{-\pd{\vartheta}\E[g(W;\vartheta, \gamma)\mid A = 1]}.\]
  In the following we change the numerator into the form of a pathwise derivative.

  (i) $\pd{\lam}\E[g(W;\vartheta, \gamma_\lam)\mid A = 1]$: 

  We will use the identity again previously shown in the proof of Theorem \ref{thm:eif}:
  \begin{equation*}
    0 = \E[\score(Y_1, Y_0|A = 0, L)\{\rho(\gamma(Y_0,L), L) - \rho(Y_1, L)\}|A = 0] + \E[\partial_x\rho(\gamma(Y_0, L), L)\cdot \plam \gamma_\lam(Y_0, L)|A = 0].
  \end{equation*}
  Denote $\pi(Y_0, L) \equiv \Prob_0(A = 1 \mid Y_0, L)$, and recall that\[\nu(x,l) \equiv \ff{P(A = 1|\gamma(Y_0, l) = x, L = l)}{P(A = 0|\gamma(Y_0, l) = x, L = l)}.\] Then, we have 
  \begin{equation*}
    \begin{aligned}
      &\pd{\lam}\E[g(W;\vartheta, \gamma_\lam)\mid A = 1]\\
      &= \pd{\lam}\E[\tilde{g}(\gamma_\lam(Y_0, L),\vartheta)\mid A = 1]\\
      &= \E\left[\pd{\gamma}\tilde{g}(\gamma(Y_0, L),\vartheta) \cdot \pd{\lam}\gamma_\lam(Y_0, L) \mid A = 1\right]\\
      &= \ff{1 - \pi}{\pi}\E\left[\pd{\gamma}\tilde{g}(\gamma(Y_0, L),\vartheta) \cdot \ff{\pi(Y_0,L)}{1 - \pi(Y_0,L)} \cdot \pd{\lam}\gamma_\lam(Y_0, L) \mid A = 0\right]\\
      &= \ff{1 - \pi}{\pi}\E\left[\pd{\gamma}\tilde{g}(\gamma(Y_0, L),\vartheta) \cdot \nu\{\gamma(Y_0,L), L\} \cdot \pd{\lam}\gamma_\lam(Y_0, L) \mid A = 0\right]\\
      &= \ff{1 - \pi}{-\pi}\E\left[\score(Y_1, Y_0|A = 0, L)\int_{Y_1}^{\gamma(Y_0, L)} \pd{x}\tilde{g}(x,\vartheta) \cdot \nu(x, L) \dx{x} \bigg| A = 0\right]\\
      &= \ff{1 - \pi}{-\pi}\E\left[\score(Y_1, Y_0, A, L)\int_{Y_1}^{\gamma(Y_0, L)} \pd{x}\tilde{g}(x,\vartheta) \cdot \nu(x, L) \dx{x} \bigg| A = 0\right]\\
      &= \E\left[\score(Y_1, Y_0, A, L) \cdot \ff{A - 1}{\pi}\int_{Y_1}^{\gamma(Y_0, L)} \nu(x, L)\dx_x{\tilde{g}(x,\vartheta)} \right]
    \end{aligned}
  \end{equation*}

  (ii) $\E[g(W;\vartheta, \gamma)\score(W\mid A = 1)\mid A = 1] = \E[\ff{A}{\pi}g(W;\vartheta, \gamma)\score(W)]$.

  (III) Summary:

  Therefore, the EIF for $\vartheta$ is
  \[\IF{\vartheta} = \ff{\ff{A}{\pi}g(W;\vartheta, \gamma) + \ff{A - 1}{\pi}\int_{Y_1}^{\gamma(Y_0, L)} \nu(x, L)\dx_x{\tilde{g}(x,\vartheta)} }{-\pd{\vartheta}\E[g(W;\vartheta, \gamma)]}.\]
  And by the semiparametric theory, the semiparametric efficiency bound for $\vartheta$ is $\E\left\{\IF{\vartheta}^2\right\}$.

\end{proof}

\subsection{Corollary \ref{cor:eif_other}}

\begin{proof}
  (a) $\vartheta^{CDT,y}$:

  Recall that $g(W;\vartheta, \gamma) = \tilde{g}(\gamma(Y_0, L), \vartheta) = \I{\gamma(Y_0, L) < y}- \vartheta$, and therefore $ \pd{\vartheta}g(W;\vartheta, \gamma) = -1$, and $ \pd{x}\tilde{g}(x,\vartheta) = -\delta_y(x)$, where $\delta_y(\cdot) \equiv \delta(\cdot - y)$ and $\delta(\cdot)$ is the Dirac delta function.

  Then, the denominator $-\pd{\vartheta}\E[g(W;\vartheta, \gamma)] = 1$, and 
  \begin{equation*}
    \begin{aligned}
      &\int_{Y_1}^{\gamma(Y_0, L)} \nu(x, L)\dx_x{\tilde{g}(x,\vartheta)}\\
      &= -\int_{Y_1}^{\gamma(Y_0, L)} \nu(x, L)\delta_y(x) \dy{x}\\
      &= \nu(y, L) \chi(y, W; \gamma)
    \end{aligned}
  \end{equation*}

  Therefore, we have
  \begin{equation*}
    \begin{aligned}
      \IF{\vartheta^{CDT,y}} &= \ff{A}{\pi}\left[\I{\gamma(Y_0, L) < y}- \vartheta^{CDT,y}\right] + \ff{A - 1}{\pi} \nu(y, L) \chi(y, W; \gamma).
    \end{aligned}
  \end{equation*}

  (b) $\vartheta^{QTT,\tau}$:

  (I) First, note that the additional condition that $Q_{Y^{a}_1\mid A = 1}(\tau)$ is unique and $Y^{a}_1\mid A = 1$ does not have point mass at $Q_{Y^{a}_1\mid A = 1}(\tau)$ for $a = 0,1$ does not change the closure of the tangent space, by Assumption \ref{assum:pos_U} and \ref{assum:ct}. Therefore, we will continue to work with the nonparametric model.

  (II) The EIF of $\vartheta^{QTT,\tau}$ is the difference between $\IF{\vartheta_1}$ and $\IF{\vartheta_2}$, where the first one is well-known in the literature, e.g. \citet{firpo2007efficient}, to be 
  \[\IF{\vartheta_1} = -\ff{A}{\pi}\ff{\I{Y_1 \le \vartheta_1} - \tau}{f_{Y_1\mid A = 1}(\vartheta_1)}.\] 
  
  In the following, we focus on deriving $\IF{\vartheta_2}$ by applying Theorem \ref{thm:eif_gen}. We may choose $g(W;\vartheta_2, \gamma)  = \tilde{g}(\gamma(Y_0,L);\vartheta_2) = \I{\gamma(Y_0, L) < \vartheta_2} - \tau$. Then, we have $\pd{\vartheta_2}\E [g \mid A = 1] = f_{\gamma(Y_0,L)\mid A = 1}(\vartheta_2)$, and $\pd{x}\tilde{g}(x,\vartheta_2) = -\delta_{\vartheta_2}(x)$. Therefore, 
  \begin{equation*}
    \begin{aligned}
      &\int_{Y_1}^{\gamma(Y_0, L)} \nu(x, L)\dx_x{\tilde{g}(x,\vartheta_2)}\\
      &= -\int_{Y_1}^{\gamma(Y_0, L)} \nu(x, L)\delta_{\vartheta_2}(x) \dy{x}\\
      &=  \nu(\vartheta_2, L) \chi(\vartheta_2, W; \gamma).
    \end{aligned}
  \end{equation*}
  Therefore,
  \begin{equation*}
    \begin{aligned}
      \IF{\vartheta_2} &= \ff{\ff{A}{\pi}\left[\I{\gamma(Y_0, L) < \vartheta_2} - \tau\right] + \ff{A - 1}{\pi}\nu(\vartheta_2, L) \chi(\vartheta_2, W; \gamma)}{-f_{\gamma(Y_0,L)\mid A = 1}(\vartheta_2)}.
    \end{aligned}
  \end{equation*}
  To conclude, $\IF{\vartheta^{QTT,\tau}} = \IF{\vartheta_1} -  \IF{\vartheta_2}$.

\end{proof}

\subsection{Lemma \ref{lem:bias}}

\begin{proof}
Recall that
\[
\psi(\theta,\eta_\lambda)
= \pi_\lambda^{-1}
  \Bigl[
    A\bigl(Y_1 - \gamma_\lambda(Y_0,L) - \theta\bigr)
    + (1-A)\!\int_{Y_1}^{\gamma_\lambda(Y_0,L)}\!\!\nu_\lambda(x,L)\dx{x}
  \Bigr].
\]
Set
\[
S_\lambda
:= A\bigl(Y_1 - \gamma_\lambda(Y_0,L) - \theta\bigr)
   + (1-A)\!\int_{Y_1}^{\gamma_\lambda(Y_0,L)}\!\!\nu_\lambda(x,L)\dx{x},
\]
so that
\[
\Phi(\lambda)
:= \E\bigl[\psi(\theta,\eta_\lambda)\bigr]
= \pi_\lambda^{-1}\,\E\bigl[S_\lambda\bigr]
= \pi_\lambda^{-1}\,g(\lambda),
\quad
g(\lambda):=\E\bigl[S_\lambda\bigr].
\]

Let $u_\lambda := \gamma_\lambda(Y_0,L)$.  Then by the chain and Leibniz rules,
\begin{align*}
S_\lambda'
&= A\bigl[-\Delta\gamma(Y_0,L)\bigr]
  + (1-A)\,\dd{}{\lam}\!\int_{Y_1}^{u_\lambda}\!\!\nu_\lambda(x,L)\dx{x} \\[4pt]
&= -\,A\,\Delta\gamma(Y_0,L)
   + (1-A)\Bigl[
       \Delta\gamma(Y_0,L)\,\nu_\lambda(u_\lambda,L)
       + \int_{Y_1}^{u_\lambda}\!\!\Delta\nu(x,L)\dx{x}
     \Bigr],\\[6pt]
S_\lambda''
&= -\,A\cdot 0
   + (1-A)\,\dd{}{\lam}\Bigl[
       \Delta\gamma\,\nu_\lambda(u_\lambda,L)
       + \int_{Y_1}^{u_\lambda}\!\!\Delta\nu(x,L)\dx{x}
     \Bigr]\\
&= (1-A)\Bigl[
     2\,\Delta\gamma(Y_0,L)\,\Delta\nu\bigl(u_\lambda,L\bigr)
     + \bigl[\Delta\gamma(Y_0,L)\bigr]^2
       \,\partial_x\nu_\lambda\bigl(u_\lambda,L\bigr)
   \Bigr].
\end{align*}

Since $\pi_\lambda'=\Delta\pi$, $\pi_\lambda''=0$, we get
\begin{align*}
\Phi'(\lambda)
&= \dd{}{\lam}\Bigl[\pi_\lambda^{-1}\,g(\lambda)\Bigr]\\
&= -\,\frac{\Delta\pi}{\pi_\lambda^2}\,g(\lambda)
   + \frac{1}{\pi_\lambda}\,g'(\lambda)
= \frac{1}{\pi_\lambda}\,\E\bigl[S_\lambda'\bigr]
  \;-\;\frac{\Delta\pi}{\pi_\lambda^2}\,\E\bigl[S_\lambda\bigr].
\end{align*}
When evaluated at $\lam = 0$, we have the expression in (a):
\[\Phi'(0)  = 0,\] by the properties of efficient influence function \citep{newey1994asymptotic}. 

Furthermore,
\begin{align*}
\Phi''(\lambda)
&= \dd{}{\lam}\Bigl[\Phi'(\lambda)\Bigr]\\
&= \frac{1}{\pi_\lambda}\,g''(\lambda)
  -2\,\frac{\Delta\pi}{\pi_\lambda^2}\,g'(\lambda)
  +2\,\frac{(\Delta\pi)^2}{\pi_\lambda^3}\,g(\lambda)\\
&= \frac{1}{\pi_\lambda}\,\E\bigl[S_\lambda''\bigr]
  -2\,\frac{\Delta\pi}{\pi_\lambda^2}\,\E\bigl[S_\lambda'\bigr]
  +2\,\frac{(\Delta\pi)^2}{\pi_\lambda^3}\,\E\bigl[S_\lambda\bigr].
\end{align*}
Therefore,
\begin{equation}\label{eq:second-diff}
    \Phi''(\lambda) = \frac{1}{\pi_\lambda}\,\E\bigl[S_\lambda''\bigr]
  -2\,\frac{\Delta\pi}{\pi_\lambda^2}\,\E\bigl[S_\lambda'\bigr]
  +2\,\frac{(\Delta\pi)^2}{\pi_\lambda^3}\,\E\bigl[S_\lambda\bigr],
\end{equation}
which is a continuous function in $\lam$ by Assumption \ref{assum:pos_U} and \ref{assum:nu}. Then, the expression in (b) follows as a simplified form of the result above.
\end{proof}

\subsection{Theorem \ref{thm:asymptotic}}

\begin{proof}
The asymptotic properties stated in Theorem~\ref{thm:asymptotic} follow from Theorems 3.1 and 3.2, and Corollaries 3.1–3.3 of \citet{chernozhukov2018double}. To apply these results, we verify that the key conditions—Assumptions 3.1 and 3.2 in their framework—are satisfied in our setting.

We begin with Assumption 3.1 of \citet{chernozhukov2018double}. Conditions (a), (b), (d), and (e) are directly satisfied by standard properties of the efficient influence function; see, e.g., \citet{newey1990semiparametric, newey1994asymptotic}. Condition (c) follows from the second derivative characterization in Equation~\eqref{eq:second-diff} of the proof of Lemma~\ref{lem:bias}.

Next, we verify Assumption 3.2. Conditions (a), (b), and (c) are implied by the convergence rate assumption on the nuisance estimators stated in Assumption~\ref{assum:rate} as well as the regularity condition Assumption \ref{assum:nu}. Condition (d) is trivially satisfied.

This completes the verification of the required assumptions.
\end{proof}

\end{document}